\definecolor{darkcyan}{rgb}{0,0.55,0.55}
\newenvironment{dotlist}{\begin{compactitem}}{\end{compactitem}}
\newenvironment{numlist}{\begin{compactenum}}{\end{compactenum}}
\newcommand{\refcond}[1]{\hyperref[#1]{C\ref*{#1}}}
\newcommand{\refprop}[1]{\hyperref[#1]{P\ref*{#1}}}
\newcommand{\refaxiom}[1]{\hyperref[#1]{A\ref*{#1}}}
\newcommand{\bbI}{\mathbb{I}}
\newcommand{\bbN}{\mathbb{N}}
\newcommand{\bbZ}{\mathbb{Z}}
\newcommand{\bbP}{\mathbb{P}}
\newcommand{\cA}{\mathcal{A}}
\newcommand{\cC}{\mathcal{C}}
\newcommand{\cF}{\mathcal{F}}
\newcommand{\cG}{\mathcal{G}}
\newcommand{\cI}{\mathcal{I}}
\newcommand{\cL}{\mathcal{L}}
\newcommand{\cO}{\mathcal{O}}
\newcommand{\cP}{\mathcal{P}}
\newcommand{\cR}{\mathcal{R}}
\newcommand{\cS}{\mathcal{S}}
\newcommand{\cT}{\mathcal{T}}
\newcommand{\cW}{\mathcal{W}}
\newcommand{\sP}{\mathscr{P}}
\newcommand{\s}[1]{\vspace{#1mm}}
\providecommand{\mit}{\mathit}
\renewcommand{\mit}{\mathit}
\def\shortrightarrowfill@{\arrowfill@\relbar\relbar\shortrightarrow}
\newcommand{\ort}{\mathpalette{\overarrow@\shortrightarrowfill@}}
\def\shortleftarrowfill@{\arrowfill@\relbar\relbar\shortleftarrow}
\newcommand{\olft}{\mathpalette{\overarrow@\shortleftarrowfill@}}
\def\shortleftrightarrowfill@{\arrowfill@\relbar\relbar\leftrightarrow}
\newcommand{\olftrt}{\mathpalette{\overarrow@\shortleftrightarrowfill@}}
\newcommand{\nin}{\not\in}
\newcommand{\sat}{\vDash}
\newcommand{\et}{\;\wedge\;}
\newcommand{\vel}{\;\vee\;}
\newcommand{\then}{\;\rightarrow\;}
\newcommand{\ang}[1]{{\langle #1 \rangle}}
\newcommand{\len}[1]{{\lvert #1 \rvert}}
\newcommand{\img}{{\cI\mspace{-2mu}\mit{mg}}}
\newcounter{theorem}
\newcommand{\theoremlike}[1]{\par\medskip\penalty-250\refstepcounter{theorem}{\bfseries\scshape\noindent#1 \thetheorem.}\slshape}
\newenvironment{theorem}{\theoremlike{Theorem}}{\par\medskip}
\newenvironment{proposition}{\theoremlike{Proposition}}{\par\medskip}
\newenvironment{lemma}{\theoremlike{Lemma}}{\par\medskip}
\newenvironment{definition}{\theoremlike{Definition}}{\par\medskip}
\newcounter{newpseudonum}[pseudocode]
  \providecommand{\refline}[1]{\hyperref[#1]{(\ref*{#1})}}
  \providecommand{\refline}[1]{\ref*{#1}}
\renewcommand{\ELSE}{\\\pcodetab{1}\mbox{ \bfseries \makebox[0pt][l]{else}\phantom{then} }}
\renewcommand{\RETURN}[1]{\ifthenelse{\equal{#1}{} }{\mbox{\bfseries return}}{\mbox{\bfseries return}#1}}
\newcommand{\FUNCTION}[2]{\mbox{\bfseries proc }\mbox{\textsc{#1}}\left(\ensuremath{#2}\right)\\}
\newcommand{\ENDFUNCTION}{}
\newlength{\pcodewidth}
\newenvironment{code}[1]{
\begin{Sbox}
\!\!\begin{minipage}{#1}
\bfseries
\noindent
\scriptsize
$$
\begin{array}{@{\hspace*{1ex}}lr@{}}
}{
\end{array}
$$
\end{minipage}\vspace{-2mm}
\end{Sbox}
\shadowbox{\TheSbox}{}
}
\renewcommand{\bar}[1]{\,\overline{\!#1\!}\,}
\newcommand{\DA}{\ang{A}}
\newcommand{\DB}{\ang{B}}
\newcommand{\DAbar}{\ang{\bar{A}}}
\newcommand{\DBbar}{\ang{\bar{B}}}
\newcommand{\DLbar}{\ang{\bar{L}}}
\newcommand{\BA}{[A]}
\newcommand{\BB}{[B]}
\newcommand{\bbO}{\mathbb{O}}
\newcommand{\ABB}{A\mspace{-0.3mu}B\bar{B}}
\newcommand{\ABBL}{A\mspace{-0.3mu}B\bar{B}\bar{L}}
\newcommand{\ABBA}{A\mspace{-0.3mu}B\bar{B}\bar{A}}
\newcommand{\AEEA}{A\mspace{-0.3mu}E\bar{E}\bar{A}}
\newcommand{\AEEL}{\bar{A}\mspace{-0.3mu}E\bar{E}L}
\newcommand{\AEE}{\bar{A}\mspace{-0.3mu}E\bar{E}}
\newcommand{\Bb}{B\mspace{-0.3mu}\bar{B}}
\newcommand{\EE}{E\mspace{-0.3mu}\bar{E}}
\newcommand{\PNL}{A\mspace{-0.3mu}\bar{A}}
\newcommand{\BBA}{\bar{A}\mspace{-0.3mu}B\bar{B}}
\newcommand{\Ba}{\bar{A}\mspace{-0.3mu}B}
\newcommand{\onlyA}{A}
\newcommand{\onlyAbar}{\bar{A}}
\newcommand{\onlyB}{B}
\newcommand{\onlyBbar}{\bar{B}}
\newcommand{\onlyE}{E}
\newcommand{\onlyEbar}{\bar{E}}
\newcommand{\onlyL}{L}
\newcommand{\onlyLbar}{\bar{L}}
\newcommand{\tABB}{\texorpdfstring{$\ABB$}{ABB}\xspace}
\newcommand{\tABBA}{\texorpdfstring{$\ABBA$}{ABBA}\xspace}
\newcommand{\tABBL}{\texorpdfstring{$\ABBL$}{ABBL}\xspace}
\newcommand{\tAEEL}{\texorpdfstring{$\AEEL$}{AEEL}\xspace}
\newcommand{\Inf}{\cI\mit{nf}}
\newcommand{\prop}{\cP\mit{rop}}
\newcommand{\closure}{\cC\mit{l}}
\newcommand{\eclosure}{\cC\mit{l}^+}
\newcommand{\witset}{\cW\mit{it}}
\newcommand{\futwit}{\cF\mit{ut}\witset}
\newcommand{\pastwit}{\cP\mit{ast}\witset}
\newcommand{\type}{\cT\mit{ype}}
\newcommand{\req}{\cR\mit{eq}}
\newcommand{\obs}{\cO\mit{bs}}
\newcommand{\shading}{\cS\mit{hading}}
\newcommand{\labeledrightarrow}[1]{\overset{\text{\raisebox{-0.1ex}[0ex][-0.1ex]{$_{#1\,}$}}}{\longrightarrow}}
\newcommand{\locallabeledrightarrow}[1]{\!\overset{\text{\raisebox{-0.1ex}[0ex][-0.1ex]{$_{#1\,}$}}}{\,\longmapsto}}
\newcommand{\dep}[1]{\,\text{\raisebox{-0.2ex}{$\labeledrightarrow{#1}$}}\,}
\newcommand{\localdep}[1]{\,\text{\raisebox{-0.2ex}{$\locallabeledrightarrow{#1}$}}\,}
\newcommand{\hs}{{\rm HS}}     
\newcommand{\pnl}{{\rm PNL}}     
\newcommand{\finishes}{\langle E \rangle}
\newcommand{\during}{\langle D \rangle}
\newcommand{\starts}{\langle B \rangle}
\newcommand{\overlaps}{\langle O \rangle}
\newcommand{\meets}{\langle A \rangle}
\newcommand{\xop}{\langle X \rangle}
\newcommand{\Txop}{\langle \overline{X} \rangle}
\newcommand{\later}{\langle L \rangle}
\title{Begin, After, and Later: a Maximal Decidable Interval Temporal Logic%
\thanks{This research was partly supported by the EU project FP7-ICT-223844 CON4COORD, by the Spanish-South African project HS2008-0006, by the Spanish MEC project TIN2009-14372-C03-01, and by the Italian GNCS project "Logics, Automata, and Games for the formal verification of complex systems".}}
\author{
Davide Bresolin
\institute{University of Verona\\ Verona, Italy\\{\em davide.bresolin@univr.it}}
\and
Pietro Sala
\institute{University of Verona\\ Verona, Italy\\{\em pietro.sala@univr.it}}
\and
Guido Sciavicco
\institute{University of Murcia\\
Murcia, Spain\\{\em guido@um.es}}
}
\begin{document}
\maketitle

\begin{abstract}
Interval temporal logics (ITLs) are logics for reasoning about temporal
statements expressed over intervals, i.e., periods of time.
The most famous ITL studied so far is Halpern and Shoham's {\rm HS},
which is the logic of the thirteen Allen's interval relations.
Unfortunately, {\rm HS} and most of its fragments have an undecidable
satisfiability problem. This discouraged the research in this area until
recently, when a number non-trivial decidable ITLs have been
discovered. This paper is a contribution towards the complete classification of all
different fragments of {\rm HS}. We consider different combinations of the
interval relations {\em begins} ($\onlyB$), {\em after} ($\onlyA$), {\em later} ($\onlyL$) and their inverses $\onlyAbar$, $\onlyBbar$ and $\onlyLbar$. We know from previous works that the combination $\ABBA$ is decidable only when finite domains are considered (and undecidable elsewhere), and that $\ABB$ is decidable over the natural numbers. We extend these results by showing that decidability of $\ABB$ can be further extended to
capture the language $\ABBL$, which lies in between $\ABB$ and $\ABBA$, and that turns out to be maximal w.r.t decidability over strongly discrete linear orders (e.g. finite orders, the naturals, the integers). We also prove that the proposed decision procedure is optimal with respect to the EXPSPACE complexity class.
\end{abstract}

\section{Introduction}

Interval temporal logics (ITLs) are logics for reasoning about temporal statements expressed over intervals instead of points. The most famous ITL studied so far is probably Halpern and Shoham's \hs~\cite{interval_modal_logic}, which is the logic of (the thirteen) Allen's interval relations between intervals~\cite{interval_relations}. It features a modal operator for each relation, that is {\em after} ($\meets$) (also called {\em meets}), {\em begins} ($\starts$), {\em ends} ($\finishes$), {\em overlaps} ($\overlaps$), {\em during} ($\during$), {\em later} ($\later$), and their inverses (denoted by $\Txop$, where $\xop$ is a modal operator), although some of them are definable in terms of others. Since~\hs\ is undecidable when interpreted over almost all interesting classes of linearly ordered sets, it is natural to ask whether there exist decidable fragments of it, and how the properties of the underlying linearly ordered domain can influence its decidable/undecidable status. In the literature, the classes of linear orderings that have received more attention are i) the class of all linearly ordered sets, ii) the set of all discrete linearly ordered sets, iii) the class of all dense linearly ordered sets. In the second case one can also distinguish among {\em strong} discreteness (i.e., $\mathbb N,\mathbb Z$-like), and {\em weak} discreteness (which allows non-standard models such as $\mathbb N+\mathbb Z$). In recent years, a number of papers have been published in which new, sometimes unexpected, decidable and undecidable fragments are presented. Among them, we mention the fragment~\texorpdfstring{$\PNL$}{AA}, also known as~\pnl, presented in~\cite{pnl_logics}, and studied also in~\cite{pnl_expressiveness}, which is decidable over all interesting classes of models; and the fragment~\tABB (and, by symmetry,~\texorpdfstring{$\AEE$}{AEE}) which is decidable when interpreted over natural numbers~\cite{abb_natural}. Interestingly enough, the extension~\tABBA (and~\texorpdfstring{$\AEEA$}{AEEA}) turns out to be decidable only when finite models are considered, and undecidable as soon as an infinite ascending (resp., descending) chain is admitted in the model~\cite{abba_finite}. Other interesting fragments are~\texorpdfstring{$\Bb$}{BB} and~\texorpdfstring{$\EE$}{EE}, that are decidable in most cases~\cite{roadmap_intervals}, while any other combination of the four operators $\onlyB$, $\onlyBbar$, $\onlyE$, and $\onlyEbar$ immediately leads to undecidability~\cite{halpern_shoham_fragments}. Other combinations such as~\texorpdfstring{$\BBA$}{BBA}, and the simpler~\texorpdfstring{$\Ba$}{BA}, though, remain still uncovered.

\medskip

In this paper, we present another piece of this complicated puzzle by considering also the Allen's relation {\em later}, that captures any interval starting at some point after the ending point of the current interval, and it can be defined as $\meets\meets$, and the inverse relation \emph{before}. We will show that the logic~\tABBL (and the symmetric logic \tAEEL) is decidable and EXPSPACE-complete when interpreted over strongly discrete linear orders. It is worth emphasizing that adding any other non-definable Allen's relation to \tABBL and to \tAEEL leads to undecidability over all considered structures, with the exception of $\onlyAbar$ and $\onlyA$, respectively, which keep decidability only when finite models are considered (and cause undecidability over infinite models). Hence, our results shows also that \tABBL and \tAEEL are maximal fragments of \hs\ with respect to decidability in the class of all strongly discrete linear orders.

\medskip

The structure of this paper is as follows. In Section~\ref{sec:logic} we introduce syntax and semantics of our logic. In Section~\ref{sec:decidability}, we deal with the decidability of the satisfiability problem over finite and infinite structures, while in Section~\ref{sec:completeness} we discuss its complexity.  Finally, in Section~\ref{sec:conclusions} we draw some conclusions and outline future research directions.

\section{The interval temporal logic \tABBL}\label{sec:logic}
	
In this section, we briefly introduce syntax and semantics of the logic $\ABBL$, along with the basic notions of atom, type, and dependency. We conclude the section by providing an alternative interpretation of $\ABBL$ over labeled grid-like structures.

\subsection{Syntax and semantics}\label{subsec:syntax}

The logic \tABBL features four modal operators $\DA$, $\DB$, $\DBbar$ and $\DLbar$, and it is interpreted in interval temporal structures over a strongly discrete linear order endowed with the four Allen's relations $A$ (``meets''), $B$ (``begins''), $\bar{B}$ (``begun by'') and $\bar{L}$ (``before'').  We recall that a linear order $\bbO=\ang{O,<}$ is \emph{strongly discrete} if and only if there are only finitely many points between any pair of points $x < y \in O$. Example of strongly discrete linear orders are all finite linear orders, and the sets $\bbN$ and $\bbZ$.


Given a set $\prop$ of propositional variables, formulas of $\ABBL$ are built up from $\prop$ using
the boolean connectives $\neg$ and $\vel$ and the unary modal operators $\DA$, $\DB$, $\DBbar$, $\DLbar$.
As usual, we shall take advantage of shorthands like $\varphi_1\et\varphi_2=\neg(\neg\varphi_1 \vel
\neg\varphi_2)$, $\BA\varphi=\neg\DA\neg\varphi$, $\BB\varphi=\neg\DB\neg\varphi$, etc. Hereafter,
we denote by $\len{\varphi}$ the size of $\varphi$. Given any strongly discrete linear order $\bbO=\ang{O,<}$  we  define $\bbI_\bbO$ as the set of all  closed intervals $[x,y]$, with $x,y\in O$ and $x < y$. For any pair of intervals $[x,y], [x',y'] \in \bbI_\bbO$, the Allen's relations ``meets'' $A$, ``begins'' $B$,  ``begun by'' $\bar{B}$, and ``before'' $\bar{L}$  are defined as follows:
\begin{dotlist}
  \item {\bf ``meets'' relation:} $[x,y] \;A\; [x',y']$ iff $y=x'$;
  \item {\bf ``begins'' relation:} $[x,y] \;B\; [x',y']$ iff $x=x'$ and $y'<y$;
  \item {\bf ``begun by'' relation:} $[x,y] \;\bar{B}\; [x',y']$ iff $x=x'$ and $y<y'$;
    \item {\bf ``before'' relation:} $[x,y] \;\bar{L}\; [x',y']$ iff  $y'<x$.
\end{dotlist}
Given an \emph{interval structure} $\cS=(\bbI_\bbO,A,B,\bar{B},\bar{L},\sigma)$, where $\sigma:\bbI_\bbO\then\sP(\prop)$
is a labeling function that maps intervals in $\bbI_\bbO$ to sets of propositional variables, and an initial
interval $I=[x,y]$, we define the semantics of an $\ABBL$ formula as follows:
\begin{dotlist}
  \item $\cS,I \sat a$ iff $a\in\sigma(I)$, for any $a\in\prop$;
  \item $\cS,I \sat \neg\varphi$ iff $\cS,I \not\sat\varphi$;
  \item $\cS,I \sat \varphi_1 \vel \varphi_2$ iff $\cS,I \sat \varphi_1$ or $\cS,I \sat \varphi_2$;
  \item for every relation $R\in\{A,B,\bar{B},\bar{L}\}$, $\cS,I \sat \ang{R}\varphi$ iff there is an interval
        $J\in\bbI_\bbO$ such that $I \;R\; J$ and $\cS,J \sat \varphi$.
\end{dotlist}
Given an interval structure $\cS$ and a formula $\varphi$, we say that $\cS$ \emph{satisfies} $\varphi$
(and hence $\varphi$ is \emph{satisfiable}) if there is an interval $I$ in $\cS$ such that $\cS,I\sat\varphi$.
Accordingly, we define the \emph{satisfiability problem} for $\ABBL$ as the problem of establishing whether a
given $\ABBL$-formula $\varphi$ is satisfiable.


As we have recalled in the Introduction, we have that $\cS,I \sat \DLbar\varphi$ iff $\cS,I \sat \DAbar\DAbar\varphi$, and thus that $\DLbar$ is definable in the language of~\tABBA. As a direct consequence of the decidability and complexity results proved in this paper, we have that the converse it is not true.
Moreover, it is easy to see that the operator $\DLbar$ cannot be defined in the language of~\texorpdfstring{$\ABB$}{ABB}: the modal operators $\DA$, $\DB$ and $\DBbar$ allow the language to see only intervals whose endpoints are greater or equals to the endpoints of the interval were a formula is interpreted. Hence, the logic \tABBL is strictly more expressive than \tABB and strictly less expressive than \tABBA.

%


\subsection{Atoms, types, and dependencies}\label{subsec:types}

Let $\cS=(\bbI_\bbO,A,B,\bar{B},\bar{L},\sigma)$ be an interval structure that satisfies the $\ABBL$-formula $\varphi$.
In the sequel, we relate intervals in $\cS$ with respect to the set of sub-formulas of $\varphi$
they satisfy. To do that, we introduce the key notions of $\varphi$-{\em atom} and $\varphi$-{\em type}.


First of all, we define the \emph{closure} $\closure(\varphi)$ of $\varphi$ as the set of all
sub-formulas of $\varphi$ and of their negations (we identify $\neg\neg\alpha$ with $\alpha$,
$\neg\DA\alpha$ with $\BA\neg\alpha$, etc.). For technical reasons, we also introduce the
\emph{extended closure} $\eclosure(\varphi)$, which is defined as the set of all formulas
in $\closure(\varphi)$ plus all formulas of the forms $\ang{R}\alpha$ and $\neg\ang{R}\alpha$,
with $R\in\{A,B,\bar{B}, \bar{L}\}$ and $\alpha\in\closure(\varphi)$. A \emph{$\varphi$-atom} is any non-empty set $F\subseteq\eclosure(\varphi)$ such that (i) for every
$\alpha\in\eclosure(\varphi)$, we have $\alpha\in F$ iff $\neg\alpha\nin F$ and (ii) for every
$\gamma=\alpha\vel\beta\in\eclosure(\varphi)$, we have $\gamma\in F$ iff $\alpha\in F$ or $\beta\in F$
(intuitively, a \emph{$\varphi$-atom} is a maximal {\sl locally consistent} set of formulas chosen
from $\eclosure(\varphi)$). Note that the cardinalities of both sets $\closure(\varphi)$ and
$\eclosure(\varphi)$ are {\sl linear} in the number $\len{\varphi}$ of sub-formulas of $\varphi$,
while the number of $\varphi$-atoms is {\sl at most exponential} in $\len{\varphi}$ (precisely,
we have $\len{\closure(\varphi)}=2\len{\varphi}$, $\len{\eclosure(\varphi)}=18\len{\varphi}$, and
there are at most $2^{8\len{\varphi}}$ distinct atoms). We define $\cA_\varphi$ as the set of all possible atoms that can be built over $\eclosure(\varphi)$.


We associate with each interval $I\in\cS$ the set of all formulas $\alpha\in\eclosure(\varphi)$
such that $\cS,I\sat\alpha$. Such a set is called \emph{$\varphi$-type} of $I$ and it is denoted by
$\type_\cS(I)$. We have that every $\varphi$-type is a $\varphi$-atom, but not vice versa. Hereafter,
we shall omit the argument $\varphi$, thus calling a $\varphi$-atom (resp., a $\varphi$-type) simply
an atom (resp., a type). Given an atom $F$, we denote by $\obs(F)$ the set of all \emph{observable} of $F$, namely, the formulas $\alpha\in\closure(\varphi)$ such that $\alpha\in F$. Similarly, given an atom $F$ and
a relation $R\in\{A,B,\bar{B},\bar{L}\}$, we denote by $\req_R(F)$ the set of all \emph{$R$-requests}
of $F$, namely, the formulas $\alpha\in\closure(\varphi)$ such that $\ang{R}\alpha\in F$. Taking
advantage of the above sets, we can define the following three relations between  two atoms $F$
and $G$:
$$
\begin{array}{rcl}
  F \dep{A} G
  &\;\quad\text{iff}\quad\quad&
  \req_A(F)                              \;=\;          \obs(G) \,\cup\, \req_B(G) \,\cup\, \req_{\bar{B}}(G)
\s1 \\
  F \dep{B} G
  &\;\quad\text{iff}\quad\quad&
  \begin{cases}
    \obs(F) \,\cup\, \req_{\bar{B}}(F)   \;\subseteq\;  \req_{\bar{B}}(G)
                                         \;\subseteq\;  \obs(F) \,\cup\, \req_{\bar{B}}(F) \,\cup\, \req_B(F)    \s1 \\
    \obs(G) \,\cup\, \req_B(G)           \;\subseteq\;  \req_B(F)
                                         \;\subseteq\;  \obs(G) \,\cup\, \req_B(G) \,\cup\, \req_{\bar{B}}(G)			\s1 \\
    \req_{\bar{L}}(F) = \req_{\bar{L}}(G).
  \end{cases}\s2\\
  F \dep{\bar{L}} G
  &\;\quad\text{iff}\quad\quad&
  \obs(G) \cup \req_{\bar{L}}(G)     \;\subseteq\;          \req_{\bar{L}}(F)    \s2 \\
\end{array}
$$
Note that the relations $\dep{B}{}$ and $\dep{\bar{L}}{}$ are transitive, while $\dep{A}{}$ is not. Moreover, all $\dep{A}{}$,
 $\dep{B}{}$ and $\dep{\bar{L}}{}$  satisfy a \emph{view-to-type dependency}, namely, for every pair of intervals $I,J$ in
$\cS$, we have that
$$
\begin{array}{rcl}
  I \;A\; J    &\;\quad\text{implies}\quad\quad&   \type_\cS(I) \,\dep{A}{}\, \type_\cS(J)     \s1 \\
  I \;B\; J    &\;\quad\text{implies}\quad\quad&   \type_\cS(I) \,\dep{B}{}\, \type_\cS(J)\s1 \\
  I \;\bar{L}\; J    &\;\quad\text{implies}\quad\quad&   \type_\cS(I) \,\dep{\bar{L}}{}\, \type_\cS(J).
\end{array}
$$

\subsection{Compass structures}\label{subsec:compass}

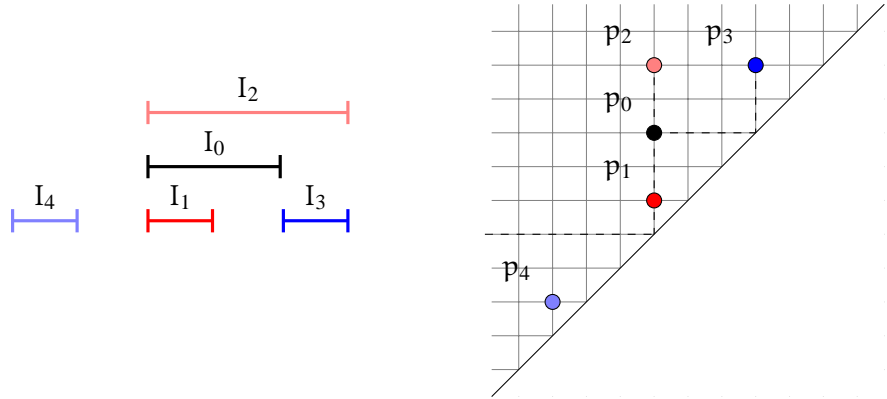
\begin{figure}[!!t]
\centering
\begin{tikzpicture}[scale=0.9]
\draw[very thick,|-|] (0,0) -- (2,0)node[pos=0.5, above] {$I_0$};
\draw[very thick,|-|,red] (0,-0.8) -- (1,-0.8)node[pos=0.5, above=0.001cm,black] {$I_1$};
\draw[very thick,|-|,red!50] (0,0.8) -- (3,0.8)node[pos=0.5, above=0.001cm,black] {$I_2$};
\draw[very thick,|-|,blue] (2,-0.8) -- (3,-0.8)node[pos=0.5, above=0.001cm,black] {$I_3$};
\draw[very thick,|-|,blue!50] (-1,-0.8) -- (-2,-0.8)node[pos=0.5, above=0.001cm,black] {$I_4$};
\pgftransformshift{\pgfpoint{8cm}{-0.5cm}}
\draw[step=0.5cm,gray,very thin] (-2.9,-2.9) grid (2.9,2.9);
\fill[color=white] (-2.9,-2.9) -- (2.9,2.9) -- (2.9,-2.9);
\draw (-2.9,-2.9) -- (2.9,2.9);
\draw[dashed] (-0.5,1) -- (-0.5, -0.5);
\draw[dashed] (-0.5,1) -- (-0.5, 2);
\draw[dashed] (-0.5,1) -- (1, 1);
\draw[dashed] (1,2) -- (1, 1);
\draw[dashed] (-3,-0.5) -- (-0.5, -0.5);
\node[shape=circle,draw=black,inner sep=2pt,fill=black, label={[label distance=0.1cm]above left:$p_0$}](A) at (-0.5,1) {};
\node[shape=circle,draw=black,inner sep=2pt,fill=red, label={[label distance=0.1cm]above left:$p_1$}](A) at (-0.5,0) {};
\node[shape=circle,draw=black,inner sep=2pt,fill=red!50, label={[label distance=0.1cm]above left:$p_2$}](A) at (-0.5,2) {};
\node[shape=circle,draw=black,inner sep=2pt,fill=blue, label={[label distance=0.1cm]above left:$p_3$}](A) at (1,2) {};
\node[shape=circle,draw=black,inner sep=2pt,fill=blue!50, label={[label distance=0.1cm]above left:$p_4$}](A) at (-2,-1.5) {};

\end{tikzpicture}
\caption{Correspondence between intervals and the points of a grid.}
\label{fig:compassstructure}
\end{figure}

The logic $\ABBL$ can be equivalently interpreted over grid-like structures (hereafter called {\em compass structures})
by exploiting the existence of a natural bijection between the intervals $I=[x,y]$ and the points $p=(x,y)$ of
an $O\times O$ grid such that $x < y$. As an example, in Fig.~\ref{fig:compassstructure} are shown five intervals
$I_0,...,I_4$, such that $I_0 \;B\; I_1$,  $I_0 \;\bar{B}\;I_2$, $I_0 \;A\; I_3$, and $I_0 \;\bar{L}\;I_4$,   together with the corresponding
points $p_0,...,p_4$ of a  grid (note that the four Allen's relations $A,B,\bar{B},\bar{L}$ between intervals
are mapped to the corresponding spatial relations between points; for the sake of readability, we name the latter
ones as the former ones).

\begin{definition}\label{def:compassstructure}
Given an $\ABBL$ formula $\varphi$, a (consistent and fulfilling) \emph{compass} ($\varphi$-)\emph{structure}
 is a pair $\cG=(\bbP_\bbO,\cL)$, where $\bbP_\bbO$ is the set of points of the form $p=(x,y)$, with $x,y \in O$ and
$ x<y$, and $\cL$ is function that maps any point $p\in\bbP_\bbO$ to a ($\varphi$-)atom $\cL(p)$ in such a way that:
\begin{dotlist}
  \item for every pair of points $p,q\in\bbP_\bbO$ and every relation $R\in\{A,B,\bar{L}\}$,
        if $p \;R\; q$ holds, then $\cL(p) \dep{R} \cL(q)$ follows ({\bf consistency});
  \item for every point $p\in\bbP_\bbO$, every relation $R\in\{A,B,\bar{B},\bar{L}\}$, and
        every formula $\alpha\in\req_R\bigl(\cL(p)\bigr)$, there is a point $q\in\bbP_\bbO$ such that
        $p \;R\; q$ and $\alpha\in\obs\bigl(\cL(q)\bigr)$ ({\bf fulfillment}).
\end{dotlist}
\end{definition}

\noindent
We say that a compass ($\varphi$-)structure $\cG=(\bbP_\bbO,\cL)$ \emph{features} a formula
$\alpha$ if there is a point $p\in\bbP_\bbO$ such that $\alpha \in \cL(p)$. The following
proposition implies that the satisfiability problem for $\ABBL$ is reducible to the problem
of deciding, for any given formula $\varphi$, whether there exists a $\varphi$-compass
structure featuring $\varphi$.

\begin{proposition}\label{prop:compassstructure}
An $\ABBL$-formula $\varphi$ is satisfied by some interval structure if and only if it is
featured by some ($\varphi$-)compass structure.
\end{proposition}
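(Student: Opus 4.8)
The plan is to prove both directions of the equivalence by constructing one structure from the other, using the natural bijection between intervals $[x,y]$ and grid points $(x,y)$ with $x<y$ that was already illustrated in Figure~\ref{fig:compassstructure}.

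\proofskip
\textbf{From interval structure to compass structure.} Suppose $\cS=(\bbI_\bbO,A,B,\bar{B},\bar{L},\sigma)$ satisfies $\varphi$, so that $\cS,I_0\sat\varphi$ for some interval $I_0=[x_0,y_0]$. First I would define $\cG=(\bbP_\bbO,\cL)$ over the same underlying order $\bbO$, where $\bbP_\bbO$ is the set of points $(x,y)$ with $x<y$, and set $\cL((x,y))=\type_\cS([x,y])$, the $\varphi$-type of the corresponding interval. I must check that $\cL$ is a well-defined map into atoms: this is exactly the statement, already recorded in Subsection~\ref{subsec:types}, that every $\varphi$-type is a $\varphi$-atom. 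The \textbf{consistency} clause then follows directly from the \emph{view-to-type dependency} property: since the Allen's relations between intervals correspond precisely to the spatial relations between the associated grid points, $p\;R\;q$ means the corresponding intervals satisfy $I\;R\;J$, whence $\type_\cS(I)\dep{R}\type_\cS(J)$ for $R\in\{A,B,\bar{L}\}$. The \textbf{fulfillment} clause follows from the semantics of the modal operators: if $\alpha\in\req_R(\cL(p))$ then $\ang{R}\alpha\in\type_\cS(I)$, so $\cS,I\sat\ang{R}\alpha$, and the semantic clause for $\ang{R}$ yields an interval $J$ with $I\;R\;J$ and $\cS,J\sat\alpha$; its corresponding point $q$ satisfies $p\;R\;q$ and $\alpha\in\obs(\cL(q))$. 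Finally $\varphi\in\obs(\cL(p_0))$ for the point $p_0$ associated with $I_0$, so $\cG$ features $\varphi$.

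\proofskip
\textbf{From compass structure to interval structure.} Conversely, given a consistent and fulfilling compass structure $\cG=(\bbP_\bbO,\cL)$ featuring $\varphi$ at some point $p_0$, I would read off an interval structure by defining $\sigma([x,y])=\obs(\cL((x,y)))\cap\prop$, the propositional letters recorded in the atom at the corresponding point. The work here is to verify, by induction on the structure of $\psi\in\closure(\varphi)$, the \emph{truth lemma}: for every point $p=(x,y)$ and interval $I=[x,y]$, we have $\cS,I\sat\psi$ iff $\psi\in\cL(p)$. The propositional base case holds by the definition of $\sigma$, and the boolean cases are immediate from the atom conditions (i) and (ii) in the definition of a $\varphi$-atom. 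The modal case $\psi=\ang{R}\alpha$ is where the two defining clauses of a compass structure are used in tandem: the ``only if'' direction uses \emph{consistency} (if $\cS,I\sat\ang{R}\alpha$, a witnessing $J$ gives $\alpha\in\cL(q)$ by the inductive hypothesis, and the dependency relation forces $\ang{R}\alpha\in\cL(p)$), while the ``if'' direction uses \emph{fulfillment} (if $\ang{R}\alpha\in\cL(p)$, i.e. $\alpha\in\req_R(\cL(p))$, fulfillment produces a point $q$ with $p\;R\;q$ and $\alpha\in\obs(\cL(q))$, and the inductive hypothesis makes the corresponding $J$ a semantic witness). Since $\varphi$ is featured at $p_0$, the truth lemma gives $\cS,I_0\sat\varphi$ for the associated interval, so $\varphi$ is satisfiable.

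\proofskip
The main obstacle is the modal ``only if'' direction of the truth lemma: one must confirm that the dependency relations $\dep{R}{}$ were designed so that consistency genuinely propagates the presence of a satisfied request $\ang{R}\alpha$ back into the atom at $p$. This requires unwinding the precise set-inclusion definitions of $\dep{A}{}$, $\dep{B}{}$ and $\dep{\bar{L}}{}$ and checking, relation by relation, that whenever $p\;R\;q$ and $\alpha\in\obs(\cL(q))$ (together, for $R=B$, with the information carried by the $\req_{\bar{B}}$ and $\req_B$ components) the formula $\ang{R}\alpha$ must lie in $\cL(p)$. A minor subtlety is that $\dep{A}{}$ is not transitive and that consistency is stated only for $R\in\{A,B,\bar{L}\}$ rather than also for $\bar{B}$; I would handle the $\bar{B}$-requests by observing that $p\;\bar{B}\;q$ is the converse of $q\;B\;p$, so that $\bar{B}$-fulfillment is governed symmetrically by the $\dep{B}{}$ relation, and no separate $\bar{B}$-consistency clause is needed.
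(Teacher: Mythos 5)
Your proof is correct, and it is exactly the standard argument the authors have in mind: the paper states Proposition~\ref{prop:compassstructure} without proof, relying on the bijection between intervals $[x,y]$ and grid points $(x,y)$ described just before it, with the left-to-right direction given by taking types and the right-to-left direction by a truth lemma whose modal cases use consistency and fulfillment exactly as you describe. Your observation that $\bar{B}$-consistency is subsumed by the $\dep{B}{}$ clause applied to the converse pair is the right way to close the one point the definition leaves implicit, so nothing is missing.
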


\section{Deciding the satisfiability problem for \tABBL}\label{sec:decidability}
	
In this section, we prove that the satisfiability problem for $\ABBL$ is decidable
by providing a ``small-model theorem'' for the satisfiable formulas of the logic.
For the sake of simplicity, we first show that the satisfiability problem for
$\ABBL$ interpreted over {\sl finite} interval structures is decidable and then we
generalize such a result to all (finite or infinite) interval structures based on strong discrete linear orders.


As a preliminary step, we introduce the key notions of shading, of witness set, and of compatibility between rows of a compass structure. Let $\cG=(\bbP_\bbO,\cL)$ be a compass structure and let $y \in O$. The \emph{shading of the row $y$ of $\cG$} is the set $\shading_\cG(y)=\bigl\{\cL(x,y) \,:\, x<y\bigr\}$, namely, the set of the atoms of all points in $\bbP_\bbO$ whose vertical coordinate has value $y$ (basically, we interpret different atoms as different colors).
A \emph{witness set for $y$} is any \emph{minimal} set $\witset(y) \subseteq \{(x_\psi,y_\psi) : x_\psi< y_\psi \wedge y_\psi>y\}$ that respects the following property:
\begin{compactitem}
	\item[\bf(WIT)] for every $\psi \in \closure(\varphi)$ that appears in the labeling of some point $(x',y')$ with $y' > y$, there exists a \emph{witness} $(x_\psi, y_\psi) \in \witset(y)$ such that
	\begin{compactenum}
		\item $\psi \in \cL(x_\psi, y_\psi)$ , and
		\item $y_\psi$ is minimal, that is, for all $(x',y')$ with $y < y' < y_\psi$, $\neg\psi \in \cL(x', y')$.
	\end{compactenum}
\end{compactitem}
\noindent
Since $\witset(y)$ is minimal we have that there is at most one distinct point for every $\psi \in \closure(\varphi)$
and thus $|\witset(y)|<|\closure(\varphi)|=2\cdot \len{\varphi}$. Intuitively, a witness set for a row $y$ is a set that contains, for every formula $\psi$ that occurs in some point above the row $y$, a witness $(x_\psi, y)$ for it, that is, a point that satisfies $\psi$ at the minimum possible distance from the row $y$. The notion of shading and of witness set allow us to determine whether two rows are \emph{compatible} or not.

Let $P\subseteq\bbP_\bbO$ a set of points and $\bar{y}$ be a coordinate we define $\pi_{\bar{y}}(P)=\{ x : (x,y)\in P \wedge x<{\bar{y}} ) \}$, the set of all $x$-coordinate belonging to points in $P$ which are smaller than $\bar{y}$.

\begin{definition}\label{def:compatible-rows}
Given a compass structure $\cG$ and two rows $y_0 < y_1$, we say that $y_0$ and $y_1$ are \emph{compatible} if and only if the following properties holds:
\begin{compactenum}
	\item $\shading_\cG(y_0) = \shading_\cG(y_1)$;
	\item $\cL(y_0-1, y_0) = \cL(y_1 - 1, y_1)$;
	\item there exists a witness set $\witset(y_1)$ for $y_1$ and an \emph{injective mapping function} $w:\pi_{y_1}(\witset(y_1))\mapsto \{x : x < y_0 \}$ s.t. $\cL(x,y_1)=\cL(w(x),y_0)$ for every $x \in \pi_{y_1}(\witset(y_1))$, that assigns a distinct $x$-coordinate on the row $y_0$ for every  witness $(x_\psi,y_\psi)$ in $\witset(y_1)$ with $x_\psi \leq y_1$.
\end{compactenum}
\end{definition}

\noindent In the following, we will show how the properties of compatible rows can be used to contract compass structures to smaller ones, first for finite models and then for infinite ones.

\subsection{A small-model theorem for finite structures}\label{subsec:finitecase}

Let $\varphi$ be an $\ABBL$ formula. It is easy to see that $\varphi$ is satisfiable over a finite model if and only if the formula $\varphi \vee \DBbar\varphi \vee \DA\varphi \vee \DA\DA\varphi$ is featured by the \emph{initial point} $(0,1)$ a {\sl finite} compass structure $\cG=(\bbP_\bbO,\cL)$. We prove that we can restrict our attention to compass structures $\cG=(\bbP_\bbO,\cL)$ with a number of points in $O$ bounded by a double exponential in $\len{\varphi}$. We start with the following lemma that proves two simple, but crucial, properties of the relations  $\dep{A}$, $\dep{B}$, and $\dep{\bar{L}}$. 

\begin{lemma}\label{lemma:entanglement}
Let $F,G,H$ be some atoms:
\begin{compactenum}
	\item if $F\dep{A}H$ and $G\dep{B}H$ hold, then $F\dep{A}G$ holds as well;
	\item if $F\dep{B}G$ and $G\dep{\bar{L}}H$ hold, then $F\dep{\bar{L}}H$ holds as well.
\end{compactenum}
\end{lemma}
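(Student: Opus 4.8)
The plan is to prove both items by directly unfolding the definitions of $\dep{A}$, $\dep{B}$, and $\dep{\bar{L}}$ and then performing elementary set inclusions; no combinatorial or model-theoretic machinery is required, so the entire argument reduces to careful bookkeeping of the one-sided inclusions packaged inside the (two-sided) chains that define $\dep{B}$. For convenience I would abbreviate $S_G = \obs(G) \cup \req_B(G) \cup \req_{\bar{B}}(G)$ and $S_H = \obs(H) \cup \req_B(H) \cup \req_{\bar{B}}(H)$.

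For item~1, note first that the conclusion $F \dep{A} G$ is by definition the equality $\req_A(F) = S_G$, while the hypothesis $F \dep{A} H$ gives $\req_A(F) = S_H$. Hence it suffices to prove the single identity $S_G = S_H$, and this is exactly where the hypothesis $G \dep{B} H$ enters. Its first chain yields $\obs(G), \req_{\bar{B}}(G) \subseteq \req_{\bar{B}}(H)$ together with $\req_{\bar{B}}(H) \subseteq S_G$; its second chain yields $\obs(H), \req_B(H) \subseteq \req_B(G)$ together with $\req_B(G) \subseteq S_H$. I would then assemble $S_H \subseteq S_G$ by placing $\obs(H)$ and $\req_B(H)$ inside $\req_B(G) \subseteq S_G$ and $\req_{\bar{B}}(H)$ inside $S_G$ directly; the reverse inclusion $S_G \subseteq S_H$ follows symmetrically, placing $\obs(G)$ and $\req_{\bar{B}}(G)$ inside $\req_{\bar{B}}(H) \subseteq S_H$ and $\req_B(G)$ inside $S_H$. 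This gives $S_G = S_H$, and therefore $\req_A(F) = S_G$, i.e.\ $F \dep{A} G$.

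Item~2 is almost immediate. The conclusion $F \dep{\bar{L}} H$ unfolds to $\obs(H) \cup \req_{\bar{L}}(H) \subseteq \req_{\bar{L}}(F)$, the hypothesis $G \dep{\bar{L}} H$ gives $\obs(H) \cup \req_{\bar{L}}(H) \subseteq \req_{\bar{L}}(G)$, and the third clause of $F \dep{B} G$ supplies the equality $\req_{\bar{L}}(F) = \req_{\bar{L}}(G)$; substituting this equality into the inclusion closes the case. The only point demanding any attention---the nearest thing to an obstacle---is matching each of the four one-sided inclusions hidden in the two chains of $\dep{B}$ to its correct target set in item~1; once these are written out explicitly, both statements fall out, and they are precisely the \emph{entanglement} facts that make the transitivity-like behaviour of $\dep{B}$ and $\dep{\bar{L}}$ interact cleanly with the (non-transitive) relation $\dep{A}$.
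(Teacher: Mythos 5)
Your proof is correct. For item~2 it coincides exactly with the paper's argument: extract $\req_{\bar{L}}(F)=\req_{\bar{L}}(G)$ from the third clause of $F\dep{B}G$ and substitute it into the inclusion $\obs(H)\cup\req_{\bar{L}}(H)\subseteq\req_{\bar{L}}(G)$ supplied by $G\dep{\bar{L}}H$. For item~1 the paper gives no argument at all---it simply cites the earlier $\ABB$ technical report---so your explicit derivation of $S_G=S_H$ from the two chains of $G\dep{B}H$ (routing $\obs(H)$ and $\req_B(H)$ through $\req_B(G)$, and $\obs(G)$ and $\req_{\bar{B}}(G)$ through $\req_{\bar{B}}(H)$, with the two right-hand inclusions of the chains supplying the remaining containments) is a welcome self-contained replacement; I checked each of the four one-sided inclusions against the definition of $\dep{B}$ and they all land on the correct target sets.
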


\begin{proof}
	The proof for property 1 can be found in~\cite{abb_report}. As for property 2, we have that, by the definition of $\dep{B}$, if $F\dep{B}G$ then $\req_{\bar{L}}(F) = \req_{\bar{L}}(G)$. This implies that $\obs(H) \cup \req_{\bar{L}}(H) \subseteq \req_{\bar{L}}(F)$ and thus $F\dep{\bar{L}}H$ holds as well.
\end{proof}

The next lemma shows that, under suitable conditions, a given compass structure
$\cG$ may be reduced in length, preserving the existence of atoms featuring $\varphi$.

\begin{lemma}\label{lemma:contraction_finite}
Let $\cG$ be a finite compass structure  of size $N$ featuring $\varphi$ on the initial point $(0,1)$. If there exist two compatible rows $0<y_0<y_1<N$ in $\cG$, then there exists a compass structure $\cG'$ of size $N' = N - y_1 + y_0$ that features $\varphi$.
\end{lemma}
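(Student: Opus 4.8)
The plan is to build $\cG'$ by deleting the block of coordinates strictly between $y_0$ and $y_1$ from the linear order $O$ and re-gluing the part above $y_1$ directly on top of the part up to $y_0$. Concretely, I take $O'$ to be $O$ with the $d = y_1 - y_0$ values $y_0+1,\dots,y_1$ removed (so $\len{O'} = N' = N - y_1 + y_0$), and I let $g : O' \to O$ be the induced strictly increasing embedding, i.e. $g(c)=c$ for $c \le y_0$ and $g(c)=c+d$ for $c>y_0$. For every point $(x',y')$ whose expansion $(g(x'),g(y'))$ is again a point of $\cG$, I set $\cL'(x',y') := \cL(g(x'),g(y'))$. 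Since $y_0 \ge 1$, the initial point $(0,1)$ is retained with its old label, so $\cG'$ features $\varphi$ as soon as it is shown to be a compass structure.

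Consistency comes essentially for free. The relations $A$, $B$, $\bar{L}$ are defined purely from $<$ and $=$ on the coordinates, and $g$ is a strictly increasing injection; hence whenever $p' \mathrel{R} q'$ holds in $\cG'$ for $R\in\{A,B,\bar{L}\}$, the expanded pair satisfies $g(p') \mathrel{R} g(q')$ in $\cG$. As $\cG$ is consistent for \emph{all} pairs of points, not merely adjacent ones, this already yields $\cL'(p') \dep{R} \cL'(q')$, even across the gluing seam (where old row $y_0$ now sits immediately below old row $y_1+1$). So no separate check of the seam is needed for consistency.

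The real work is fulfillment. Requests of retained points whose witnesses in $\cG$ already lie in the retained region transfer verbatim through $g$. The only danger is a request witnessed in $\cG$ solely inside the deleted block, or witnessed by a point $(x,y)$ with $y>y_1$ but $x \in (y_0,y_1]$, that is, a retained row but a deleted column: such a point has no image under $g$ and is simply lost by the naive pullback. This is exactly where the three compatibility conditions enter. Condition~1, $\shading_\cG(y_0)=\shading_\cG(y_1)$, lets me re-witness the upward $A$- and $\bar{B}$-requests of bottom points and guarantees that the set of colours the glued top part presents to the rows below matches what was there before the cut; Condition~2 fixes the seam corner $(y_0-1,y_0)$. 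Condition~3 is the crucial one: for every formula $\psi$ occurring above $y_1$ I take its witness in $\witset(y_1)$, and if that witness has left endpoint $x_\psi \le y_1$ sitting in a deleted column, I relocate it to the distinct retained column $w(x_\psi)<y_0$, overwriting the pullback label there. The identity $\cL(x_\psi,y_1)=\cL(w(x_\psi),y_0)$ together with $\cL(x_\psi,y_\psi)\dep{B}\cL(x_\psi,y_1)$ makes the relocated point $\dep{B}$-compatible with the column beneath it, injectivity of $w$ guarantees that distinct witnesses never compete for the same column, and minimality of $\witset(y_1)$ bounds the number of relocations by $\len{\closure(\varphi)}$, so only finitely many points are touched.

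The main obstacle is verifying that this relocation is sound: that each relocated point is consistent with \emph{all} of its neighbours, not just its own column, and that the relocation spawns no fresh unfulfilled requests. Here I expect to lean on Lemma~\ref{lemma:entanglement}: property~1 recovers the $A$-dependencies of a relocated witness from the $B$-structure of its new column, so that the row-$w(x_\psi)$ points still $\dep{A}$ the moved label, and property~2 propagates $\dep{\bar{L}}$ across a $\dep{B}$-step, which is what is needed to re-establish the ``before''-requests reaching over the seam. Since each relocated label is an atom already realized as a type in $\cG$, its requests were globally satisfiable to begin with, and the shading-preservation built into Conditions~1 and~3 keeps them satisfiable in $\cG'$; assembling these observations closes the fulfillment check and exhibits $\cG'$ as a compass structure of size $N'$ featuring $\varphi$.
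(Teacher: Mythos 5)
There is a genuine gap, and it lies exactly where you place the ``real work''. Your construction keeps each retained column glued to \emph{itself} across the seam (new column $x$ is old column $x$ below $y_0$ and old column $x$ shifted down above it), and then patches in witnesses pointwise. This fails in two ways. First, fulfillment of $B$- and $\bar{B}$-requests breaks: these requests must be witnessed in the \emph{same column}, so if a formula $\alpha$ is observed in column $x$ only at rows in the deleted band $(y_0,y_1]$ (say only at $(x,y_1)$), then in your $\cG'$ the points $(x,y)$ with $y\le y_0$ have an unfulfilled $\bar{B}$-request $\alpha$ and the points above the seam an unfulfilled $B$-request $\alpha$. Neither $\shading_\cG(y_0)=\shading_\cG(y_1)$ nor $\witset(y_1)$ helps here: the shading equality only produces a witness in some \emph{other} column, and $\witset(y_1)$ by definition only covers formulas occurring strictly above row $y_1$. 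Second, your relocation of a single witness point $(x_\psi,y_\psi)$ into column $w(x_\psi)$ destroys consistency: the relocated label $\cL(x_\psi,y_\psi)$ must be $\dep{B}$-related to every other point of new column $w(x_\psi)$, and for the points of that column strictly between the seam and the relocated row your pullback labels come from old column $w(x_\psi)$ above $y_1$, for which no $\dep{B}$ relation with old column $x_\psi$ is available ($\dep{B}$ only relates intervals sharing a left endpoint, and Lemma~\ref{lemma:entanglement} cannot manufacture a $\dep{B}$ fact between two unrelated columns). The closing appeal to ``the atom is realized somewhere in $\cG$, so its requests were satisfiable'' is not an argument: fulfillment is a property of where the atom sits, not of the atom.

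The missing idea, which is how the paper proceeds, is to re-map \emph{entire columns} rather than delete columns and relocate points: choose $f$ on $\{0,\dots,y_0-1\}$ with $\cL(x,y_0)=\cL(f(x),y_1)$ (possible by the shading equality) and $f(w(x_\psi))=x_\psi$ for the witnesses, and define the upper part of new column $x$ to be the whole of old column $f(x)$ above row $y_1$, shifted down by $k$. Then every new column is, above the seam, a genuine old column (so within-column $\dep{B}$-consistency is inherited, with transitivity of $\dep{B}$ handling the seam), the seam atoms are literally equal (so a $B$-request surviving to the seam becomes a $B$-request of $\cL(x,y_0)$ and is re-witnessed below $y_0$ by fulfillment of $\cG$, and symmetrically for $\bar{B}$), and each $\bar{L}$-witness of $\witset(y_1)$ survives as a whole column at position $w(x_\psi)$. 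Your version of the argument cannot be repaired without switching to this column-level surgery.
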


\begin{proof}
Suppose that $0<y_0<y_1<N$ are two compatible rows of $\cG$. By definition, we have that $\shading_\cG(y_0) = \shading_\cG(y_1)$, $\cL(y_0-1, y_0) = \cL(y_1 - 1, y_1)$, and there exists a witness set $\witset(y_1)$ for $y_1$ and an \emph{injective mapping function} $w: \pi_{y_1}(\witset(y_1))\mapsto \{x : x < y_0 \}$ that assigns a distinct $x$-coordinate on the row $y_0$ for every witness $(x_\psi,y_\psi)$ in $\witset(y_1)$ with $x_\psi \leq y_1$.
Then, we can define a function $f:\{0,...,y_0-1\}\mapsto\{0,...,y_1-1\}$ such that,
for every $0\le x<y_0$, $\cL(x,y_0)=\cL(f(x),y_1)$ and for every $(x_\psi,y_\psi)\in\witset(y_1)$ if $x_\psi<y_1$   then $f(w(x_\psi))=x_\psi$.

Let $k=y_1-y_0$, $N'=N-k$ ($<N$), $\bbO' = \langle \{0, \ldots, N'-1\}, <\rangle$, and $\bbP_{\bbO'}$ be the correspondent portion of the grid. We extend $f$ to a function that maps points in $\bbP_{\bbO'}$ to points in $\bbP_\bbO$ as follows:
\begin{dotlist}
  \item if $p=(x,y)$, with $0\le x<y<y_0$, then we simply let $f(p)=p$;
  \item if $p=(x,y)$, with $0\le x<y_0\le y$, then we let $f(p)=(f(x),y+k)$;
  \item if $p=(x,y)$, with $y_0\le x<y$, then we let $f(p)=(x+k,y+k)$.
\end{dotlist}
We denote by $\cL'$ the labeling of $\bbP_{\bbO'}$ such that, for every point $p\in\bbP_{\bbO'}$,
$\cL'(p)=\cL(f(p))$ and we denote by $\cG'$ the resulting structure $(\bbP_{\bbO'},\cL')$ (see Figure
\ref{fig:contraction}). We have to prove that $\cG'$ is a consistent and fulfilling
compass structure that features $\varphi$. First, we
show that $\cG'$ satisfies the consistency conditions for the relations $B$, $A$, and $\bar{L}$; then we show
that $\cG'$ satisfies the fulfillment conditions for the $\bar{B}$-, $B$-, $A$, and $\bar{L}$-requests;
finally, we show that $\cG'$ features $\varphi$.

\medskip\noindent\textsc{Consistency with relation $B$.\;\;}
Consider two points $p=(x,y)$ and $p'=(x',y')$ in $\cG'$ such that $p \;B\; p'$, i.e., $0\le x=x'<y'<y<N'$.
We prove that $\cL'(p) \dep{B} \cL'(p')$ by distinguishing among the following three cases (note that
exactly one of such cases holds):
\begin{numlist}
  \item $y<y_0$ and $y'<y_0$,
  \item $y\ge y_0$ and $y'\ge y_0$,
  \item $y\ge y_0$ and $y'<y_0$.
\end{numlist}

If $y<y_0$ and $y'<y_0$, then, by construction, we have $f(p)=p$ and $f(p')=p'$. Since $\cG$
is a (consistent) compass structure, we immediately obtain $\cL'(p)=\cL(p) \dep{B} \cL(p')=\cL'(p')$.

If $y\ge y_0$ and $y\ge y_0$, then, by construction, we have either $f(p)=(f(x),y+k)$ or $f(p)=(x+k,y+k)$,
depending on whether $x<y_0$ or $x\ge y_0$. Similarly, we have either $f(p')=(f(x'),y'+k)=(f(x),y'+k)$ or
$f(p')=(x'+k,y'+k)=(x+k,y'+k)$. This implies $f(p) \;B\; f(p')$ and thus, since $\cG$ is a (consistent)
compass structure, we have $\cL'(p)=\cL(f(p))$ $\dep{B}$ $\cL(f(p'))=\cL'(p')$.

If $y\ge y_0$ and $y'<y_0$, then, since $x<y'<y_0$, we have by construction $f(p)=(f(x),y+k)$ and $f(p')=p'$.
Moreover, if we consider the point $p''=(x,y_0)$ in $\cG'$, we easily see that (i) $f(p'')=(f(x),y_1)$,
(ii) $f(p) \;B\; f(p'')$ (whence $\cL(f(p)) \dep{B} \cL(f(p''))$), (iii) $\cL(f(p''))=\cL(p'')$, and
(iv) $p'' \;B\; p'$ (whence $\cL(p'') \dep{B} \cL(p')$). It thus follows that
$\cL'(p)=\cL(f(p)) \dep{B} \cL(f(p''))$ $=\cL(p'')$ $\dep{B}$ $\cL(p')=\cL(f(p'))=\cL'(p')$. Finally,
by exploiting the transitivity of the relation $\dep{B}$, we obtain $\cL'(p) \dep{B} \cL'(p')$.

\medskip\noindent
\textsc{Consistency with relation $A$.\;\;}
Consider two points $p=(x,y)$ and $p'=(x',y')$ such that $p \;A\; p'$, i.e., $0\le x<y=x'<y'<N'$.
We define $p''=(y,y+1)$ in such a way that $p \;A\; p''$ and $p' \;B\; p''$ and we distinguish between
the following two cases:
\begin{numlist}
  \item $y\ge y_0$,
  \item $y<y_0$.
\end{numlist}

If $y\ge y_0$, then, by construction, we have $f(p) \;A\; f(p'')$. Since $\cG$ is a (consistent)
compass structure, it follows that $\cL'(p)=\cL(f(p))$ $\dep{A}$ $\cL(f(p''))=\cL'(p'')$.

If $y<y_0$,
then, by construction, we have $\cL(p'')=\cL(f(p''))$. Again, since $\cG$ is a (consistent) compass
structure, it follows that $\cL'(p)=\cL(f(p))=\cL(p)$ $\dep{A}$ $\cL(p'')=\cL(f(p''))=\cL'(p'')$.

In both cases we have $\cL'(p) \dep{A} \cL'(p'')$. Now, we recall that $p' \;B\; p''$ and that, by
previous arguments, $\cG'$ is consistent with the relation $B$. We thus have $\cL'(p') \dep{B} \cL'(p'')$.
Finally, by applying Lemma \ref{lemma:entanglement}, we obtain $\cL'(p) \dep{A} \cL'(p')$.

\begin{figure}[tbp]
\centering
\includegraphics[scale=0.8]{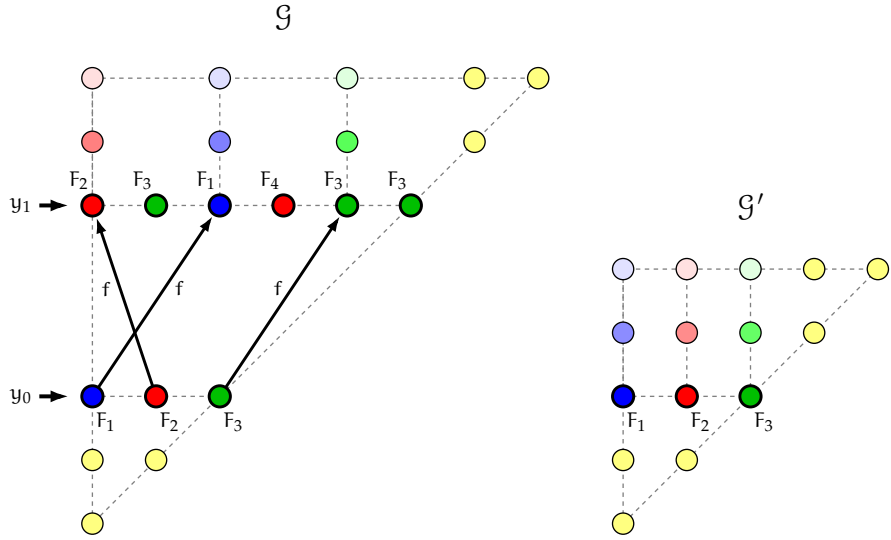}
\caption{Contraction $\cG'$ of a compass structure $\cG$.}
\label{fig:contraction}
\end{figure}

\medskip\noindent
\textsc{Consistency with relation $\bar{L}$.\;\;}
Consider two points $p=(x,y)$ and $p'=(x',y')$ in $\cG'$ such that $p \;\bar{L}\; p'$, i.e., $0\le x'<y'<x<y<N'$.
We prove that $\cL'(p) \dep{\bar{L}} \cL'(p')$ by distinguishing among the following three cases (note that
exactly one of such cases holds):
\begin{numlist}
  \item $y<y_0$ and $y'<y_0$,
  \item $y\ge y_0$ and $y'\ge y_0$,
  \item $y\ge y_0$ and $y'<y_0$.
\end{numlist}

If $y<y_0$ and $y'<y_0$, then, by construction, we have $f(p)=p$ and $f(p')=p'$. Since $\cG$
is a (consistent) compass structure, we immediately obtain $\cL'(p)=\cL(p) \dep{\bar{L}} \cL(p')=\cL'(p')$.

If $y\ge y_0$ and $y'\ge y_0$, then, by construction, we have either $f(p')=(f(x'),y'+k)$ or $f(p')=(x'+k,y'+k)$,
depending on whether $x'<y_0$ or $x'\ge y_0$. Since $y_0 \leq y' < x$, we have $f(p)=(x+k,y+k)$. This implies $f(p) \;\bar{L}\; f(p')$ and thus, since $\cG$ is a (consistent)
compass structure, we have $\cL'(p)=\cL(f(p))$ $\dep{\bar{L}}$ $\cL(f(p'))=\cL'(p')$.

If $y\ge y_0$ and $y'<y_0$, then, we have by construction that $f(p')=p'$ and either $f(p)=(x+k,y+k)$ or $f(p)=(f(x),y+k)$. In the former case we have that $f(p)\;\bar{L}\;f(p')$ and thus, since $\cG$ is a consistent compass structure, $\cL'(p)=\cL(f(p))\dep{\bar{L}}\cL(f(p'))=\cL'(p')$. In the latter case it is not necessarily true that $y' < f(x)$. Consider the points $p'' = (f(x), y_1)$ and $p''' = (x, y_0)$: by the definition of $f$, $\cL(p'') = \cL(p''')$. Moreover, we have that $f(p) B p''$ and $p''' \bar{L} f(p') = p'$.
Since $\cG$ is a consistent compass structure, this implies that $\cL'(p) = \cL(f(p)) \dep{B} \cL(p'') = \cL(p''') \dep{\bar{L}} \cL(f(p')) = \cL'(p')$. 
Finally, by applying Lemma \ref{lemma:entanglement}, we obtain $\cL'(p) \dep{\bar{L}} \cL'(p')$.

\medskip\noindent
\textsc{Fulfillment of $B$-requests.}\!
Consider a point $p=(x,y)$ in $\cG'$ and some $B$-request $\alpha\in\req_B\bigl(\cL'(p)\bigr)$ associated
with it. Since, by construction, $\alpha\in\req_B\bigl(\cL(f(p))\bigr)$ and $\cG$ is a (fulfilling)
compass structure, we know that $\cG$ contains a point $q'=(x',y')$ such that $f(p) \;B\; q'$ and
$\alpha\in\obs\bigl(\cL(q')\bigr)$. We prove that $\cG'$ contains a point $p'$ such that $p \;B\; p'$
and $\alpha\in\obs\bigl(\cL'(p')\bigr)$ by distinguishing among the following three cases (note that
exactly one of such cases holds):
\begin{numlist}
  \item $y<y_0$
  \item $y'\ge y_1$,
  \item $y\ge y_0$ and $y'<y_1$.
\end{numlist}

If $y<y_0$, then, by construction, we have $p=f(p)$ and $q'=f(q')$. Therefore, we simply
define $p'=q'$ in such a way that $p=f(p) \;B\; q'=p'$ and $\alpha\in\obs\bigl(\cL'(p')\bigr)$
($=\obs\bigl(\cL(f(p'))\bigr)=\obs\bigl(\cL(q')\bigr)$).

If $y'\ge y_1$, then, by construction, we have either $f(p)=(f(x),y+k)$ or $f(p)=(x+k,y+k)$,
depending on whether $x<y_0$ or $x\ge y_0$. We define $p'=(x,y'-k)$ in such a way that $p \;B\; p'$.
Moreover, we observe that either $f(p')=(f(x),y')$ or $f(p')=(x+k,y')$, depending on whether $x<y_0$
or $x\ge y_0$, and in both cases $f(p')=q'$ follows. This shows that $\alpha\in\obs\bigl(\cL'(p')\bigr)$
($=\obs\bigl(\cL(f(p')\bigr)=\obs\bigl(\cL(q')\bigr)$).

If $y\ge y_0$ and $y'<y_1$, then we define $\bar{p}=(x,y_0)$ and $\bar{q}=(x',y_1)$ and we observe
that $f(p) \;B\; \bar{q}$, $\bar{q} \;B\; q'$, and $f(\bar{p})=\bar{q}$. From $f(p) \;B\; \bar{q}$
and $\bar{q} \;B\; q'$, it follows that $\alpha\in\req_B\bigl(\cL(\bar{q})\bigr)$ and hence
$\alpha\in\req_B\bigl(\cL(\bar{p})\bigr)$.
Since $\cG$ is a (fulfilling) compass structure, we know that there is a point $p'$ such that
$\bar{p} \;B\; p'$ and $\alpha\in\obs\bigl(\cL(\bar{p}')\bigr)$. Moreover, since $\bar{p} \;B\; p'$,
we have $f(p')=p'$, from which we obtain $p \;B\; p'$ and $\alpha\in\obs\bigl(\cL(p')\bigr)$.

\medskip\noindent
\textsc{Fulfillment of $\bar{B}$-requests.}
The proof that $\cG'$ fulfills all $\bar{B}$-requests of its atoms is symmetric with respect to
the previous one.

\medskip\noindent
\textsc{Fulfillment of $A$-requests.\!}
Consider a point $p=(x,y)$ in $\cG'$ and some $A$-request\! $\alpha\in\req_A\bigl(\!\cL'(p)\!\bigr)$
associated with $p$ in $\cG'$. Since, by previous arguments, $\cG'$ fulfills all $\bar{B}$-requests
of its atoms, it is sufficient to prove that either $\alpha\in\obs\bigl(\cL'(p')\bigr)$ or
$\alpha\in\req_{\bar{B}}\bigl(\cL'(p')\bigr)$, where $p'=(y,y+1)$. This can be easily proved by
distinguishing among the three cases $y<y_0-1$, $y=y_0-1$, and $y\ge y_0$.

\medskip\noindent
\textsc{Fulfillment of $\bar{L}$-requests.\!}
Consider a point $p=(x,y)$ in $\cG'$ and some $\bar{L}$-request $\alpha\in\req_{\bar{L}}\bigl(\cL'(p)\bigr)$ associated with it. Since, by construction, $\alpha\in\req_{\bar{L}}\bigl(\cL(f(p))\bigr)$ and $\cG$ is a (fulfilling) compass structure, we know that $\cG$ contains a point $q'=(x',y')$ such that $f(p) \;\bar{L}\; q'$ and $\alpha\in\obs\bigl(\cL(q')\bigr)$. To simplify the proofs, we assume that $q'$ is \emph{minimal} with respect to the vertical coordinate, that is, for every other point $q''=(x'',y'')$ with $y'' < y'$, $\alpha\not\in\obs\bigl(\cL(q'')\bigr)$. We prove that $\cG'$ contains a point $p'$ such that $p \;\bar{L}\; p'$
and $\alpha\in\obs\bigl(\cL'(p')\bigr)$ by distinguishing among the following five cases (note that
exactly one of such cases holds):
\begin{numlist}
  \item $y\leq y_0$,
  \item $x < y_0$ and $y \geq y_0$, 
  \item $x \geq y_0$ and $y' < y_1$,
  \item $x \geq y_0$ and $y' = y_1$,
  \item $x \geq y_0$ and $y' > y_1$.
\end{numlist}

If $y<y_0$, then, by construction, we have $p=f(p)$ and $q'=f(q')$. Therefore, we simply
define $p'=q'$ in such a way that $p=f(p) \;\bar{L}\; q'=p'$ and $\alpha\in\obs\bigl(\cL'(p')\bigr)$
($=\obs\bigl(\cL(f(p'))\bigr)=\obs\bigl(\cL(q')\bigr)$).

If $x < y_0$ and $y \geq y_0$ then $f(p) = (f(x),y+k)$. Now, consider the point $p'' = (f(x),y_1)$: since $f(p) B p''$ and $\cG$ is a consistent compass structure, we have that $\req_{\bar{L}}(p'') = \req_{\bar{L}}(f(p))$. By definition of $f$, we have that $\cL(f(x),y_1) = \cL(x,y_0)$ and thus, since $\cG$ is fulfilling, there exists a point $p'=(x'',y'')$ such that $y'' < x$ and $\alpha \in\obs\bigl(\cL(p')\bigr)$. Since $f(p') = p'$, this shows that $\alpha\in\obs\bigl(\cL'(p')\bigr)$ as well.

If $x \geq y_0$ and $y' < y_1$ then $f(p) = (x + k, y  + k)$. Since $\cG$ is a consistent compass structure, we have that $\alpha \in \req_{\bar{L}}(\cL(y_1-1,y_1))$. By the definition of compatible rows, we have that $\cL(y_1-1,y_1) = \cL(y_0-1,y_0)$ and thus (by the minimality assumption) $y' < y_0$ and $q'=f(q')$. Therefore, we simply define $p'=q'$ in such a way that $p \;\bar{L}\; q'=p'$ and $\alpha\in\obs\bigl(\cL'(p')\bigr)$
($=\obs\bigl(\cL(f(p'))\bigr)=\obs\bigl(\cL(q')\bigr)$).
 
If $x \geq y_0$ and $y' = y_1$ then $\cL(q') \in \shading_\cG(y_1)$. By the definition of compatible rows, we have that $\shading_\cG(y_1) = \shading_\cG(y_0)$ and thus there must exists a point $q'' = (x'',y_0)$ such that $\cL(q') = \cL(q'')$ and $y_0 < y'$, against the hypothesis that $q'$ is a minimal point satisfying $\alpha$. Hence, this case cannot happen.

If $x \geq y_0$ and $y' > y_1$ then, by the minimality assumption on $q'$ we have that for every $y'' < y'$, $\alpha\not\in\obs\bigl(\cL(x'',y'')\bigr)$ for any $x'' < y''$. Hence, by the definition of witness set, we have that there exists a witness $(x_\alpha, y_\alpha) \in \witset(y_1)$ such that $\alpha\in\obs\bigl(\cL(x_\alpha,y_\alpha)\bigr)$ and $y_\alpha = y'$ (by the minimality assumption). If $x_\alpha \geq y_1$ then we define $p' = (x_\alpha-k, y_\alpha - k)$. Otherwise, $x_\alpha < y_1$ and by the definition of the mapping function $w$ and of the function $f$, we have that $f(w(x_\alpha)) = x_\alpha$: we define $p' = (w(x_\alpha), y' - k)$. In both cases we have that $f(p') = (x_\alpha, y_\alpha)$, $p \bar{L} p'$ and $\alpha\in\obs\bigl(\cL'(p')\bigr)$.

\medskip\noindent
\textsc{Featured formulas.\;\;}
Recall that, by previous assumptions, $\varphi\in\cL(0,1)$. Since our contraction procedure never changes the labelling of the initial point, $\varphi\in\cL'(0,1)$ as well.
\end{proof}

\medskip
On the grounds of the above result, we can provide a suitable upper bound for the length
of a minimal finite interval structure that satisfies $\varphi$, if there exists any.
This yields a straightforward, but inefficient, 2NEXPTIME algorithm that decides
whether a given $\ABBL$-formula $\varphi$ is satisfiable over finite interval structures.

\begin{theorem}\label{th:contraction_finite}
An $\ABBL$-formula $\varphi$ is satisfied by some finite interval structure iff it is
featured by some compass structure of length $N\le (8|\varphi| + 15)^{2^{32\len{\varphi}+56}} \cdot 2^{32\len{\varphi}+56}$ (i.e., double
exponential in $\len{\varphi}$).
\end{theorem}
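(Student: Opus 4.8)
The plan is to combine the contraction Lemma~\ref{lemma:contraction_finite} with a pigeonhole argument over a suitable finite invariant attached to the rows of a compass structure. First I would invoke the reduction recalled just before the statement: by Proposition~\ref{prop:compassstructure} (in its finite version), $\varphi$ is satisfied by some finite interval structure iff the formula $\psi = \varphi \vel \DBbar\varphi \vel \DA\varphi \vel \DA\DA\varphi$ is featured by the initial point $(0,1)$ of some \emph{finite} compass structure. The ``if'' direction of the theorem is then immediate, since a compass structure of the bounded length $N$ is in particular finite. For the ``only if'' direction I would fix a compass structure $\cG$ of \emph{minimal} length among those featuring $\psi$ at $(0,1)$ and prove that its length cannot exceed $N$; note that $\psi\in\cL(0,1)$ forces $\varphi$ to be featured somewhere, by fulfilment.

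To this end I would attach to every row $y$ of $\cG$ a \emph{signature} recording exactly the data governing Definition~\ref{def:compatible-rows}: namely (i) the capped counting function $c_y$ that assigns to each atom $F \in \cA_\psi$ the value $\min\bigl(\len{\{x<y : \cL(x,y)=F\}},K\bigr)$, where $K$ is the witness-set bound $\len{\witset(y)} < 2\len{\psi}$, together with (ii) the diagonal atom $\cL(y-1,y)$. The support of $c_y$ is precisely $\shading_\cG(y)$, so the shading is recovered from the signature. The number of such signatures is at most $(K+1)^{\len{\cA_\psi}}\cdot\len{\cA_\psi}$; bounding the closure and atom count of $\psi$ via $\len{\psi}\le 4\len{\varphi}+O(1)$, $\len{\cA_\psi}\le 2^{8\len{\psi}}\le 2^{32\len{\varphi}+56}$, and $K+1\le 8\len{\varphi}+15$ yields a bound of the stated form $N = (8\len{\varphi}+15)^{2^{32\len{\varphi}+56}}\cdot 2^{32\len{\varphi}+56}$, the explicit additive constants being chosen precisely so that these inequalities hold.

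The heart of the argument is the claim that two interior rows $0<y_0<y_1$ carrying the same signature are \emph{compatible} in the sense of Definition~\ref{def:compatible-rows}. Condition~1 follows because equal capped counts have equal supports, whence $\shading_\cG(y_0)=\shading_\cG(y_1)$; condition~2 is recorded verbatim by the diagonal atom. Condition~3 is the delicate point: I would choose any minimal witness set $\witset(y_1)$ and build the injective map $w$ on $\pi_{y_1}(\witset(y_1))$ by a Hall-type matching. The bipartite graph linking each witness $x$-coordinate $x<y_1$ to the row-$y_0$ positions with the same atom is a disjoint union of complete bipartite pieces indexed by atoms, so Hall's condition collapses to a per-atom inequality: for each $F$ the number of witness coordinates with $\cL(x,y_1)=F$ is at most $\len{\witset(y_1)}\le K$, and equality of capped counts gives $c_{y_0}(F)=c_{y_1}(F)$, which forces at least that many $F$-labelled positions on row $y_0$. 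This produces $w$ with $\cL(x,y_1)=\cL(w(x),y_0)$.

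Finally I would close the loop by pigeonhole: if $\cG$ had length strictly greater than $N$ it would contain more interior rows than there are signatures, so two of them, say $0<y_0<y_1<N$, would share a signature and hence be compatible; Lemma~\ref{lemma:contraction_finite} would then produce a strictly shorter compass structure still featuring $\psi$ at $(0,1)$, contradicting the minimality of $\cG$. Therefore the length of $\cG$ is at most $N$, proving the ``only if'' direction. I expect the main obstacle to be the verification of condition~3: one must check that the witnesses needing relocation are exactly those with $x$-coordinate below $y_1$ (the others being absorbed by the shift in the proof of Lemma~\ref{lemma:contraction_finite}), and that the single cap $K$ is large enough for the per-atom Hall inequality to hold simultaneously for every atom.
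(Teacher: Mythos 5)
Your proposal follows essentially the same route as the paper's proof: the same reduction to a compass structure featuring $\varphi \vee \DBbar\varphi \vee \DA\varphi \vee \DA\DA\varphi$ at $(0,1)$, the same capped characteristic function $c_y$ (with cap $2\len{\xi}=8\len{\varphi}+14$) paired with the diagonal atom $\cL(y-1,y)$, and the same pigeonhole argument combined with Lemma~\ref{lemma:contraction_finite} applied to a minimal-length structure. The only difference is that you spell out, via the per-atom Hall-type matching, why equal characteristic functions yield condition~3 of Definition~\ref{def:compatible-rows} --- a step the paper dismisses as ``easy to see'' --- and your verification of it is correct.
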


\begin{proof}
Suppose that $\varphi$ is satisfied by a finite interval structure $\cS$, and let $\xi = \varphi \vee \DBbar\varphi \vee \DA\varphi \vee \DA\DA\varphi$. By Proposition \ref{prop:compassstructure}, there
is a compass structure $\cG$ that features $\xi$ on the initial point and has finite length $N$. By Lemma~\ref{lemma:contraction_finite}, we can assume without loss of generality that all rows of $\cG$ are pairwise incompatible. We recall from Section \ref{subsec:types} that $\cG$ contains at most $2^{8\len{\xi}}$ distinct
atoms. For every row $y$ of the compass structure and every atom $F \in \cA_\xi$, let $\#(F, y)$ be the cardinality of the set $\{(x,y) : x < y$ and $\cL(x,y) = F\}$. We associate to every row $y$ of the structure a \emph{characteristic function} $c_y : \cA_\xi \mapsto \bbN$ defined as follows:
\begin{eqnarray}
	c_y(F) & = & \begin{cases}
								\#(F, y)		&	 \#(F, y) \leq 2|\xi| \\
								2|\xi|			& \text{otherwise}
							\end{cases}
\end{eqnarray}

\noindent Since any witness set $\witset(y)$ contains at most $2|\xi|$ witnesses, it is easy to see that two rows $y_0$ and $y_1$ with the same characteristic function and such that $\cL(y_0-1,y_0) = \cL(y_1-1,y_1)$ are compatible. The number of possible characteristic functions is bounded by $(2|\xi|+1)^{2^{8\len{\xi}}}$, and thus $\cG$ cannot have more than $(2|\xi|+1)^{2^{8\len{\xi}}}\cdot 2^{8\len{\xi}}$ rows. 
Since $|\xi| = 4|\varphi| + 7$ we can conclude that $N \le (8|\varphi| + 15)^{2^{32\len{\varphi}+56}} \cdot 2^{32\len{\varphi}+56}$, and thus double exponential in $\len{\varphi}$.
\end{proof}

	\subsection{A small-model theorem for infinite structures}\label{subsec:infinitecase}

In general, compass structures that feature $\varphi$ may be infinite. Here, we prove
that, without loss of generality, we can restrict our attention to sufficiently ``regular''
infinite compass structures, which can be represented in double exponential space with respect
to $\len{\varphi}$. To do that, we introduce the notion of compass structure generator, that is, of a finite compass structure featuring $\varphi$ that can be extended to an infinite fulfilling one.

\begin{definition}\label{def:partialfulfilling}
We say that a finite compass structure $\cG=(\bbP_\bbO,\cL)$ of size $N$ is \emph{partially fulfilling} if
for every point $(x,y) \in \bbP_\bbO$ such that $y < N-1$, for every relation $R\in \{A,B,\bar{B},\bar{L}\}$, and for every formula $\psi \in \req_R(\cL(p))$, one of the following conditions hold:
  \begin{compactenum}
  \item there exists a point $p' \in \bbP_\bbO$ such that  $p\ R\ p'$ and $\psi \in \obs(\cL(p'))$	($\psi$ is fulfilled in $p'$),
  \item $R=\bar{B}$ and $\psi \in \req_{\bar{B}}(\cL(x,N-1))$,
  \item $R=A$ and $\psi \in \req_{\bar{B}}(\cL(y,N-1))$,
  \item $R=\bar{L}$ and $\psi \in \req_{\bar{L}}(\cL(0,1))$.
  \end{compactenum}
\end{definition}

\noindent Notice that all $B$-requests are fulfilled in a partially fulfilling compass structure and that $\bar{B}$, $A$, and $\bar{L}$ requests are either fulfilled or ``transferred to the border'' of the compass structure. Moreover, any substructure $\cG'$ of a fulfilling compass structure $\cG$ is partially fulfilling.

\begin{definition}\label{def:futurewitset}
Given a finite compass structure $\cG=(\bbP_\bbO,\cL)$ and a row $y$, a \emph{future witness set for $y$} is any \emph{minimal} set $\futwit(y) \subseteq \{x : x < y \}$ such that for every $F\in \shading_\cG(y)$ there exists a witness $x_F\in \futwit(y)$ that respects the following properties:
\begin{compactenum}
	\item $\cL(x_F,y)=F$,
	\item for every $\psi \in \req_{\bar{B}}(F)$ there exists a point $(x_F,y')\in \cG$ with $y'>y$ and $\psi \in \obs(\cL(x_F,y))$.
\end{compactenum}
\end{definition}

\noindent Since $\futwit(y)$ is minimal, we have that  for every $F \in \shading(y)$ there is exactly one witness $x_F$ in $\futwit(y)$. Hence, $|\futwit(y))|\leq 2^{8\len{\varphi}}$.   

\begin{definition}\label{def:pastwitset}
Given a finite compass structure $\cG=(\bbP_\bbO,\cL)$ and a row $y$, a \emph{past witness set for $y$} is any \emph{minimal} set $\pastwit(y) \subseteq \bbP_\bbO$ such that for every request $\psi \in \req_{\bar{L}}(\obs(\cL(y-1,y))$ there exists a witness $(x_\psi,y_\psi)$ such that $\psi \in \obs(\cL(x_\psi,y_\psi))$ and $y_\psi < y - 1$.
\end{definition}

\noindent Again, by the minimality of $\pastwit(y)$ we have that there is at most one distinct point  
for every $\bar{L}$-formula in $\cL(y-1,y)$ and thus $\len{\pastwit(y)}\leq \len{\req_{\bar{L}}(y-1,y)}\leq \len{\closure(\varphi)}\leq 2\cdot \len{\varphi}$.

\medskip

We concentrate our attention on infinite structures that are unbounded both on the future and on the past (i.e., based on the set of integers $\bbZ$). The case when the structure is unbounded only in one direction (e.g., the naturals $\bbN$ or the set of negative integers $\bbZ^-$) can be tackled in a similar way by appropriately adapting the following notions and theorems.

\begin{definition}\label{def:compassgenerator}
Given an $\ABBL$ formula $\varphi$ and a finite, partially fulfilling compass structure $\cG=(\bbP_\bbO,\cL)$
of size  $N$, we say that $\cG$ is a \emph{compass generator for $\varphi$} if there exists four rows $y_\varphi$, $y_0$, $y_1$, and $y_2$ which satisfy the following properties:
\begin{compactenum}[\bf G1]
	\item $y_0 < y_1 < y_2$ and $y_0 \leq y_\varphi$,
	\item $\varphi \in \cL(y_ \varphi-1,y_ \varphi)$ or $\DBbar\varphi\in \cL(y_ \varphi-1,y_ \varphi)$,
	\item $\shading(y_1) \subseteq \shading(y_0)$ and $\cL(y_0-1, y_0) = \cL(y_1-1, y_1)$,
	\item there exists a past witness set  $\pastwit(y_1)$ such that $y_0 \leq \min(\pi_{y_1}(\pastwit(y_1)))$,
	\item $\shading(N-1) \subseteq \shading(y_2)$ and $\cL(y_2-1, y_2) = \cL(N-2, N-1)$,
	\item there exists a future witness set  $\futwit(y_2)$ for $y_2$.
\end{compactenum}
\end{definition}

The next theorem shows that the information contained in a compass generator for $\varphi$ is sufficient to build an infinite fulfilling compass structure featuring $\varphi$.

\begin{theorem}\label{thm:generateinfinite}
An $\ABBL$ formula $\varphi$ is satisfiable over the integers $\bbZ$ if and only if there exists 
a compass generator $\cG=(\bbP_\bbO,\cL)$ for $\varphi$.
\end{theorem}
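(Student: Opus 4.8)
The plan is to prove both implications while working entirely with compass structures: by Proposition \ref{prop:compassstructure}, $\varphi$ is satisfiable over $\bbZ$ precisely when it is featured by some infinite fulfilling compass structure over $\bbZ$, so it suffices to relate such infinite structures to finite compass generators. The sufficiency direction (a generator yields a model) is the creative one and will be built by explicit tiling; the necessity direction (a model yields a generator) is a pigeonhole extraction.

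For necessity I would start from an infinite fulfilling compass structure $\cG=(\bbP_\bbZ,\cL)$ over $\bbZ$ that features $\varphi$ and carve out a finite window realising G1--G6. Since there are at most $2^{8\len{\varphi}}$ atoms, only finitely many descriptors $\bigl(\shading_\cG(y),\cL(y-1,y)\bigr)$ occur. Toward the future I would first let $\bigcup_{y'\ge y}\shading_\cG(y')$ stabilise as $y\to+\infty$ (it is non-increasing in $y$, hence eventually constant), and then, inside the stable region, use the pigeonhole principle to pick two rows $y_2 < N-1$ with the \emph{same} descriptor; this gives $\shading_\cG(N-1)=\shading_\cG(y_2)$ and $\cL(y_2-1,y_2)=\cL(N-2,N-1)$ (G5), and a future witness set for $y_2$ (G6) exists because $\cG$ is fulfilling. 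Symmetrically, using $\bigcup_{y'\le y}\shading_\cG(y')$, I would pick $y_0<y_1$ with the same descriptor (G3); since the finitely many minimal-distance $\bar{L}$-witnesses of the leftmost point of row $y_1$ sit at fixed positions, choosing $y_0$ far enough in the past secures the placement condition G4. Placing $y_\varphi$ at a row featuring $\varphi$ (pushing the witness onto the leftmost point via $\DBbar$ when necessary, giving G2) and relabelling the window so it starts at $0$ yields a finite structure that is partially fulfilling, since every substructure of a fulfilling structure is.

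For sufficiency I would, conversely, build an infinite fulfilling structure over $\bbZ$ from a given generator $\cG$ by keeping its core and inserting infinitely many shifted copies of two slabs: upward, copies of the future slab $\{(x,y):y_2\le y\le N-1\}$ shifted by $k_{\mathrm{fut}}=(N-1)-y_2$, and downward, copies of the past slab $\{(x,y):y_0\le y\le y_1\}$ shifted by $k_{\mathrm{past}}=y_1-y_0$. The matching conditions G5 and G3 make the shading and the leftmost atom of the top row of each copy agree with those of the bottom row of the next, so the seams introduce no local inconsistency, while the future witness set of G6 supplies inside each copy the $\bar{B}$-witnesses (and, via clause~3 of partial fulfilment, the witnesses for $A$-requests pushed to the border), and the past witness set of G4 supplies the $\bar{L}$-witnesses that partial fulfilment had merely ``transferred to the border'' at $(0,1)$. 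I would define $\cL$ on each point of $\bbP_\bbZ$ by mapping it back to its representative in $\cG$ and then verify the two clauses of Definition \ref{def:compassstructure}.

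The main obstacle, and the heart of the verification, is the consistency and fulfilment of the relations that cross arbitrarily many copies: $\bar{L}$, which reaches arbitrarily far into the past, $\bar{B}$, which reaches arbitrarily far up a column, and $A$ at the seams. For consistency I expect to reduce every long-range pair to a chain of short-range pairs holding inside single copies and close the chain with Lemma \ref{lemma:entanglement} — clause~2 ($F\dep{B}G$ and $G\dep{\bar{L}}H$ imply $F\dep{\bar{L}}H$) propagates $\bar{L}$-consistency across a $B$-step, and clause~1 handles $A$ against $B$ at a seam — together with the transitivity of $\dep{B}$ and $\dep{\bar{L}}$. For fulfilment the delicate case is an $\bar{L}$-request issued in a far-future copy; here I would argue, exactly as in the $\bar{L}$-analysis of Lemma \ref{lemma:contraction_finite}, that the request is inherited by the leftmost atom of a seam row, whose $\bar{L}$-requests coincide across matched rows, so that the generator's past witness set, transported into a suitable past copy, provides the witness. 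The shading containments of G3 and G5 are what guarantee that the required atoms actually occur in the copies where witnesses must live, and featuring of $\varphi$ itself is then immediate from G2.
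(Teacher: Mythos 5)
Your proposal follows essentially the same route as the paper's proof: for necessity, a pigeonhole extraction based on the finitely many shading/left-atom descriptors and the eventually-stable (infinitely occurring) shadings to locate $y_0<y_1$ and $y_2<N-1$ with the required matchings and witness sets; for sufficiency, an iterative periodic extension of the generator by shifted copies of the past slab $[y_0,y_1]$ and future slab $[y_2,N-1]$, with the shading containments G3/G5 aligning the seams, the witness sets of G4/G6 discharging the $\bar{L}$-, $\bar{B}$- and border-transferred $A$-requests, and Lemma~\ref{lemma:entanglement} plus transitivity closing long-range consistency. The argument is correct at the same level of detail as the paper's, so no further comparison is needed.
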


\begin{proof}
\textbf{($\Rightarrow$)}
Let $\varphi$ an $\ABBL$ formula that is satisfiable over an infinite fulfilling compass structure
$\cG=(\bbP_\bbZ,\cL)$. Since $\cG$ features $\varphi$ we have that there exists a point $(x,y)$
with $\varphi \in \cL(x,y)$ and thus the row $y_\varphi = x + 1$ respects condition \textbf{G2}. 

Now, let $\Inf(\cG)$ be the set of shadings that occurs infinitely often in $\cG$. 
We define $y_1$ as the greatest row such that for every $y' \leq y_1$, $\shading(y') \in \Inf(\cG)$, and $y_2$ as the smallest row such that for every $y' \geq y_2$, $\shading(y') \in \Inf(\cG)$. Clearly, since $\cG$ is unbounded in the past, we can find two rows $y_{min}$ and $y_0$ such that $y_{min} < y_0$, and a corresponding portion of the grid $\bbP_{y_{min}} = \{(x,y) : x \geq y_{min}\}$ such that
 \begin{inparaenum}[\it (i)]
	\item $y_0 \leq y_\varphi$,
	\item $y_0 < y_1$,
	\item $\shading(y_1) \subseteq \shading(y_0)$ in $\bbP_{y_{min}}$,
	\item $\cL(y_0-1, y_0) = \cL(y_1-1, y_1)$, and
	\item there exists a past witness set  $\pastwit(y_1)$ for $y_1$ such that $y_0 \leq \min(\pi_{y_1}(\pastwit(y_1)))$ in $\bbP_{y_{min}}$.
\end{inparaenum}
Hence, conditions \textbf{G3} and \textbf{G4} are respected. 

Symmetrically, since $\cG$ is unbounded in the future, we can find a row $y_{max} > y_2$ and a corresponding portion of the grid $\bbP_{y_{min}}^{y_{max}} = \{(x,y) : x \geq y_{min} \land y \leq y_{max}\}$ such that
\begin{asparaenum}
	\item $\shading(y_{max}) \subseteq \shading(y_2)$,
	\item $\cL(y_2-1, y_2) = \cL(y_{max}-1, y_{max})$, and
	\item there exists a future witness set  $\futwit(y_2)$ for $y_2$ in $\bbP_{y_{min}}^{y_{max}}$.
\end{asparaenum}
This shows that conditions \textbf{G5} and \textbf{G6} are respected as well. Since $y_0 \leq y_\varphi$ and $y_0 < y_1 < y_2$ condition \textbf{G1} is also respected. Since the restriction of $\cG$ to the finite grid $\bbP_{y_{min}}^{y_{max}}$ is a partially fulfilling compass structure, we have found the required compass generator for $\varphi$.

\medskip

\textbf{($\Leftarrow$)}
Let $\cG=(\bbP_\bbO,\cL)$ be a compass generator of size $N$ for $\varphi$ and let $y_0<y_1<y_2$ and $y_\varphi$ 
be the four rows that satisfy properties \textbf{G1}--\textbf{G6} of Definition~\ref{def:compassgenerator}.
We will define an infinite sequence of partially fulfilling compass structures $\cG_0 \subset \cG_1 \subset \cG_2 \subset \ldots$ such that the infinite union $\cG^\omega = \bigcup_{i=0}^{+\infty} \cG_i$ is an infinite fulfilling compass structure that features $\varphi$. We start from the initial compass structure $\cG_0=(\bbP^0,\cL^0)$ where $\bbP^0=\{(x,y)\in\bbP_\bbO\ :\  x \geq y_0 - 1 \land y_0\leq y < N \}$ and $\cL^0(x,y)=\cL(x,y)$ for every point $(x,y)\in \bbP^0_\bbO$, and we will show how to iteratively build the infinite sequence of compass structures. For every step $i$ of the procedure, let $\cG_i=(\bbP^i,\cL^i)$ be the current structure, and let
$y^i_{min}$ and $y^i_{max}$ be the minimum and maximum vertical coordinate in $\bbP^i$, respectively. We guarantee that the following invariant is respected:

\begin{itemize}[\bf (INV)]
\item	$\shading_{\cG_i}(y^i_{max}) \subseteq \shading_{\cG}(y_2)$, \\
$\shading_{\cG_i}(y^i_{min}+ y_1 - y_0) \subseteq \shading_{\cG}(y_0)$, \\
$\cL^i(y^i_{max}-1, y^i_{max}) = \cL(y_2-1,y_2)$, and $\cL^i(y^i_{min}-1, y^i_{min}) = \cL(y_0-1,y_0)$.
\end{itemize}

\noindent The invariant trivially holds for $\cG_0$. Now, suppose that $\cG_i$ respects \textbf{(INV)} and let $k_{past} = y_1 - y_0$ and $k_{future} = N - y_2$. Figure~\ref{fig:compassgenerator} depicts how $\cG_{i+1}=(\bbP^{i+1},\cL^{i+1})$ can be built from $\cG_i$. Formally, the procedure is defined as follows.

\begin{figure}[t]
\centering
	\input{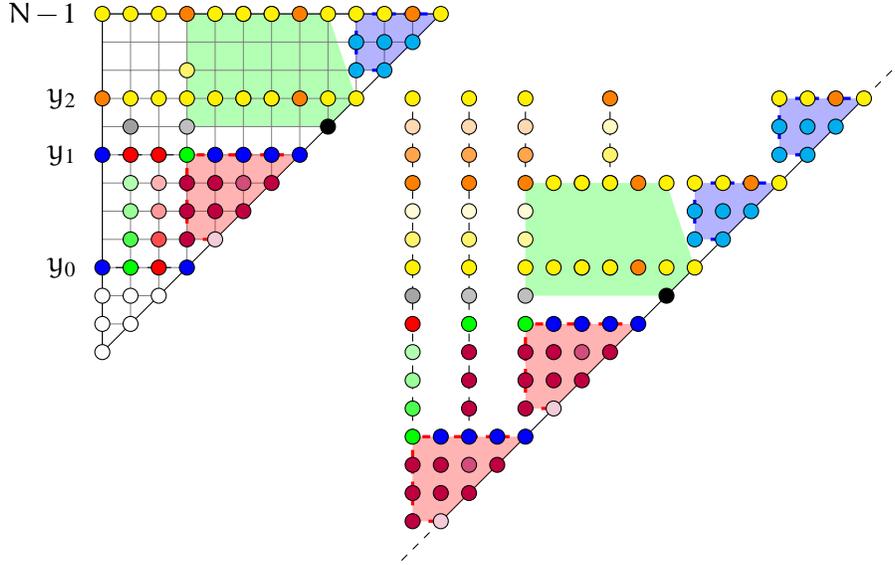}
\caption{A compass generator (left) and a portion of the generated infinite compass structure (right).}
\label{fig:compassgenerator}
\end{figure}

\begin{enumerate}[\quad a)]
	\item $y^{i+1}_{min} = y^{i}_{min} - k_{past}$, $y^{i+1}_{max} = y^{i}_{max} + k_{future}$, and $\bbP^{i+1} = \{(x,y)\in\bbP_\bbZ\ :\  x \geq y^{i+1}_{min} - 1 \land y^{i+1}_{min} \leq y < y^{i+1}_{max} \}$. 
	
	\item for every point $p \in \bbP^{i+1} \cap \bbP^{i}$, let $\cL^{i+1}(p)=\cL^i(p)$.
	
	\item for every point $(x,y) \in \bbP^{i+1} \setminus \bbP^{i}$ such that $y \leq y^i_{min}$, let $\cL^{i+1}(x,y)=\cL^i(x+k_{past},y+k_{past})$ (red area in Fig.~\ref{fig:compassgenerator}). 
	
	\item for every point $(x,y) \in \bbP^{i+1} \setminus \bbP^{i}$ such that $x \geq y^i_{max}$, let $\cL^{i+1}(x,y)=\cL^i(x-k_{future},y-k_{future})$ (blue area in Fig.~\ref{fig:compassgenerator}).
	
	\item By construction, for every point $(x,y^i_{min})$ with $x < y^i_{min}-1$ we have that $\cL^{i+1}(x,y^i_{min})=\cL^i(x+k_{past},y^i_{min}+k_{past})$. Since $\cG_i$ respects the invariant, $\cL^i(x+k_{past},y^i_{min}+k_{past}) = \cL^{i+1}(x,y^i_{min}) \in\shading_\cG(y_0)$. Let $(\bar{x},y_0)$ be a point on the row $y_0$ with the same labelling of $\cL^{i+1}(x,y^i_{min})$: we define the labelling of all points $(x, y^i_{min} + j)$, with $1\leq j\leq k_{past}$, as $\cL^{i+1}(x,y^i_{min}+j)=\cL(\bar{x},y_0+j)$. Now, since $\cL(\bar{x},y_0+k_{past}) \in \shading_\cG(y_1)$ ($y_1=y_0+k_{past}$) and $\shading_\cG(y_1)\subseteq\shading_\cG(y_0)$ (\textbf{G3}), we can find a point $(\hat{x},y_0)$ on the row $y_0$ with the same labelling of $\cL^{i+1}(x,y^i_{min}+k_{past})$ and define the labelling of every point $(x,y^i_{min}+k_{past}\cdot j)$ for every $1< j\leq i+1$. At the end of this procedure we have labelled all points $(x,y)$ such that $y \leq y_1$. 
	
	\item For every point $(x,y_1)$, by construction, we have that $\cL^{i+1}(x, y_1)\in \shading_\cG(y_1)$. Let $(\bar{x},y_1)$ be a point such that $\cL^{i+1}(x, y_1)=\cL(\bar{x}, y_1)$. As in the previous case, we define the labelling of all points $(x, y)$, with $y_1< y \leq y_2$ as $\cL^{i+1}(x,y)=\cL(\bar{x},y)$. At the end of this step we labelled all points $(x,y)$ such that $y \leq y_2$. 
	
	\item Now, by construction, for every point $(x,y_2)$ we have that $\cL^{i+1}(x,y_2)\in \shading_\cG(y_2)$. By condition \textbf{G6} of Definition~\ref{def:compassgenerator}, there exists a point $\bar{x} \in \futwit(y_2)$ such that $\cL^{i+1}(x,y_2)=\cL(\bar{x},y_2)$. We define $\cL^{i+1}(x,y_2 + j)=\cL(\bar{x},y_2 + j )$ for every $1 \leq j \leq k_{future}$. Since $y_2+k_{future}=N-1$ and $\shading(N-1)\subseteq \shading(y_2)$ we have that $\cL^{i+1}(x,N-1) \in \shading(y_2)$ (\textbf{G5}) and thus we can repeat this procedure iteratively until we have labelled all points $(x,y)$ such that $y \leq y^{i+1}_{max}$ and $x < y^i_{min}-1$.

	\item To conclude the procedure, we must define the labelling of points $(x,y)$ such that $x \geq y^i_{min}-1$ and $y \geq y^{i}_{max}$. Note that for every point $(x,y^i_{max})$ with $x\geq y^i_{min}-1$ we have, by the invariant, that $\shading_{\cG^i}(y^i_{max})\subseteq \shading_\cG$ $(y_2)$. Then there exists a point $\bar{x} \in \futwit(y_2)$ such that $\cL^{i+1}(x,y^i_{max})=\cL(\bar{x},y_2)$. We define $\cL^{i+1}(x,y^i_{max} + j)=\cL(\bar{x},y_2 + j )$ for every $1 \leq j \leq k_{future}$. 
\end{enumerate}

It is easy to see that $\cG_i$ is a partially fulfilling compass structure that respects the invariant.
Moreover, suppose that for some point $p = (x,y) \in \bbP^{i}$ and relation $R\in \{ A, B, \bar{B}, \bar{L} \}$ there exists  $\alpha\in \req_R(p)$ that is not fulfilled in $\cG_i$. We show that $\cG_{i+1}$ fulfills the $R$-request $\alpha$ for $p$.
\begin{compactitem}
	\item If $R=A$, since $\cG^i$ is partial fulfilling and it is finite we have that the point $p'=(y,y^i_{max})$ is such that $\alpha \in \req_{\bar{B}}(\cL(p'))$. By step h) of the procedure, and by the definition of future witness set, $\cG_{i+1}$ contains a point $p'' = (y,y^i_{max}+j)$ such that $\alpha \in \cL^{i+1}(p'')$.

	\item If $R=B$, by Definition \ref{def:partialfulfilling} we have all the $B$-requests in a partial fulfilling compass structure are fulfilled and thus this case connot be given.
	
\item If $R=\bar{B}$ the case is analogous to the case of $R=A$.

\item If $R=\bar{L}$, since  $\cG^i$ is partial fulfilling and it is finite we have that $\alpha \in \req_{\bar{L}}(\cL(y^i_{min}-1,y^i_{min}))$. By point c) of the construction we have that $\cL^i(y^i_{min}-1,y^i_{min})=\cL(y_0-1,y_0)=\cL(y_1-1,y_1)$. Hence, by condition \textbf{G4} of Definition~\ref{def:compassgenerator} and by the definition of past witness set, there exists a point $(\bar{x},\bar{y})$ with $y_0\leq \bar{x}<\bar{y}\leq y_1$ such that $\alpha \in \cL(\bar{x},\bar{y})$. By construction we have that $\cL(\bar{x},\bar{y})=\cL^{i+1}(\bar{x}-(i+1)\cdot k_{past},\bar{y}-(i+1)\cdot k_{past})$ and thus and thus the $\bar{L}$-request $\alpha$ for the point $p$ is fulfilled at step $i+1$ by the point $(\bar{x}- (i+1)\cdot k_{past},\bar{y}-(i+1)\cdot k_{past})$.
\end{compactitem}

Hence, we can conclude that the infinite compass structure $\cG^\omega$ is fulfilling. By condition \textbf{G2} of Definition~\ref{def:compassgenerator} we have that $\cG^\omega$ features $\varphi$ and thus that $\varphi$ is satisfiable over the integers.
\end{proof}

Theorem~\ref{thm:generateinfinite} shows that satisfiability of a formula over infinite models can be reduced to the existence of a finite compass generator for it. However, it does not give any bound on the size of it. In the following we will show how the techniques exploited in Section~\ref{subsec:finitecase} for finite models can be adapted to obtain a doubly exponential bound on the size of compass generators.

\begin{definition}\label{def:global-compatible-rows}
Given a compass generator $\cG=(\bbP_\bbO,\cL)$,  we say that two rows $y<y'$ are \emph{globally compatible} if and only if the following properties holds:
\begin{compactenum}
	\item  $\cL(y-1, y) = \cL(y' - 1, y')$ and $\shading_\cG(y) = \shading_\cG(y')$,
	\item for every $\bar{y} \in\{y_\varphi,y_0,y_1,y_2\}$ it is not the case that $y \leq \bar{y} \leq y'$,
	\item there exists a past witness set $\pastwit(y_1)$ such that for every point $(\bar{x},\bar{y})\in \pastwit(y_1)$ it is not the case that $y \leq \bar{y} \leq y'$;
	\item there exists a future witness set $\futwit(y_2)$ such that for every point $\bar{x}\in \futwit(y_2)$ and every $\bar{B}$-request $\alpha \in \req_{\bar{B}}(\cL(\bar{x},y_2)$ there is a point $(\bar{x},\bar{y})$ such that $y_2 < \bar{y}$, $\alpha \in \obs(\cL(\bar{x},y_2))$ and it is not the case that $y \leq \bar{y} \leq y'$;
	\item there exists a witness set $\witset(y')$ for $y'$ and an \emph{injective mapping function} $w: \pi_{y'}(\witset(y')\cup\pastwit(y_1)\cup \futwit(y_2)) \mapsto \{x : x < y \}$, such that $\cL(x,y')=\cL(w(x),y)$, for every \linebreak $x\in\pi_{y'}(\witset(y')\cup\pastwit(y_1)\cup \futwit(y_2))$, and $w(x)=x$, for every 
	$x \in\pi_{y'}(\pastwit(y_1)$.
\end{compactenum}
\end{definition}

Clearly, two globally compatible rows are compatible. The additional conditions of the definition guarantees that the contraction procedure do not remove ``meaningful'' parts of the compass generator, like the rows $y_\varphi$, $y_0$, $y_1$, and $y_2$ (condition $2$) or future and past witnesses (conditions $3$ and $4$).

\begin{lemma}\label{lemma:contraction-infinite}
Let $\cG$ be a compass generator for $\varphi$ of size $N$. If there exist two global-compatible rows $0<y<y'<N$ in $\cG$, then there exists a compass generator $\cG'$ of size $N' = N - y + y'$ that features $\varphi$.
\end{lemma}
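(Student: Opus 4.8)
The plan is to reuse the contraction map of Lemma~\ref{lemma:contraction_finite} and then check that the data witnessing ``compass generator'' survives. First, observe that two globally compatible rows are compatible in the sense of Definition~\ref{def:compatible-rows}: condition~1 of Definition~\ref{def:global-compatible-rows} supplies conditions~1 and~2 there, and restricting the injective map of condition~5 to $\pi_{y'}(\witset(y'))$ supplies condition~3. Hence I would build $\cG'$ through the very same relabeling $f$ used in Lemma~\ref{lemma:contraction_finite}, with $k=y'-y$, obtaining a structure whose length is reduced by $k$. The consistency proofs for $B$, $A$, and $\bar{L}$ used only consistency of $\cG$ and Lemma~\ref{lemma:entanglement}, so they carry over unchanged and $\cG'$ is a consistent compass structure.

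Second, I would verify that $\cG'$ is partially fulfilling (Definition~\ref{def:partialfulfilling}), which is weaker than the full fulfillment proved in Lemma~\ref{lemma:contraction_finite}. For requests genuinely fulfilled inside $\cG$ the fulfillment arguments of that lemma apply verbatim. The new point is that a $\bar{B}$-, $A$-, or $\bar{L}$-request may instead be transferred to a border (clauses 2--4 of Definition~\ref{def:partialfulfilling}); here I would use that the contraction leaves all border data fixed: the diagonal-adjacent points $(Y-1,Y)$ map under $f$ to diagonal-adjacent points with the same atom, the initial point $(0,1)$ is untouched, and the top row $N-1$ is merely shifted to $N'-1$ with unchanged labelling. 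Thus a request transferred to a border in $\cG$ is transferred to the corresponding border in $\cG'$.

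Third, and this is the heart of the argument, I would exhibit the four distinguished rows of $\cG'$ and check G1--G6. By condition~2 each of $y_\varphi,y_0,y_1,y_2$ lies strictly outside $[y,y']$, so each is either left in place (if below $y$) or shifted down by $k$ (if above $y'$); I take $y_\varphi',y_0',y_1',y_2'$ to be these images. Monotonicity of the contraction then gives G1, and G2 follows because $(y_\varphi-1,y_\varphi)$ and its image carry the same atom. For the witness conditions G4 and G6 I would invoke conditions~3 and~4: they keep the points realizing the past witnesses of $y_1$ and the $\bar{B}$-obligations of the future witnesses of $y_2$ out of the removed band, so these witness sets persist in $\cG'$, while the clause $w(x)=x$ of condition~5 pins the past-witness columns in place and preserves the minimality requirement $y_0'\le\min(\pi_{y_1'}(\pastwit(y_1')))$.

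I expect the real obstacle to be the shading inclusions G3 and G5. Contraction preserves the shading of every row below $y$ exactly and maps the shading of a shifted row to a subset of the original, so whenever the band $[y,y']$ separates the two rows in question the inclusion follows by transitivity. The delicate case is when both rows lie above $y'$, since then an atom realized only on a deleted column could be lost. For $y_2$ and the top row this is harmless, because the future witness set of condition~4 contains a surviving column for \emph{every} atom of $\shading(y_2)$, whence $\shading_{\cG'}(y_2')=\shading_\cG(y_2)$ and G5 is immediate; the genuinely careful step is $\shading_{\cG'}(y_1')\subseteq\shading_{\cG'}(y_0')$ for G3, where one must show, using condition~5 together with the equality $\shading_\cG(y)=\shading_\cG(y')$ of condition~1, that no atom needed on row $y_0$ is removed by the deletion of the columns in $[0,y')\setminus\mathrm{image}(f)$. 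This bookkeeping across the three placements of the band (below, between, and above the pair) is the part that demands care rather than a single transitivity step.
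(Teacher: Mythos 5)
Your proposal is correct and follows essentially the same route as the paper: reuse the contraction map of Lemma~\ref{lemma:contraction_finite} (after noting that global compatibility implies compatibility), check that the result is partially fulfilling, and then do a case analysis on where the removed band $[y,y']$ sits relative to $y_\varphi,y_0,y_1,y_2$, using conditions 2--5 of Definition~\ref{def:global-compatible-rows} to keep the distinguished rows and the past/future witness sets intact. If anything you are more explicit than the paper, whose own proof dispatches three of the four cases with ``it is easy to prove'' and never spells out the preservation of the shading inclusions \textbf{G3} and \textbf{G5} that you correctly single out as the delicate step.
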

\begin{proof}
We can define a function $f:\{0,...,y\}\then\{0,...,y'\}$ and contract $\cG$ to a smaller compass structure $\cG'$  in the very same way of Lemma \ref{lemma:contraction_finite}. It can be easily proved that the obtained $\cG'$ is a partial fulfilling compass structure. Let $k = y'-y$ and let $y'_\varphi=y_\varphi$ if $y_\varphi<y$, $y'_\varphi=y_\varphi-k$ otherwise. To prove that $\cG'$ is a compass generator, let us consider the following four cases.

\begin{compactitem}
\item[-] If $y'<y_0$, then we have that $y'_i=y_i-k$ for $i \in \{0,1,2,\varphi\}$ satisfy conditions  \textbf{G1-G6} in $\cG'$.

\item[-] If $y_0<y<y'<y_1$, then for every point $(\bar{x},\bar{y}) \in \pastwit(y_1)$ we have that either $f(\bar{x},\bar{y})=(\bar{x},\bar{y})$ (when $\bar{y} < y$) or $f(\bar{x},\bar{y}-k)=(w(\bar{x}),\bar{y}-k)=(\bar{x},\bar{y})$ (when $\bar{y} > y'$), and thus $\pastwit(y_1)$ is a past witness set for $\cG'$ as well. 
From this we can conclude that $y'_\varphi,y_0,y_1-k,$ and $ y_2-k $ satisfy conditions  \textbf{G1-G6} in $\cG'$.

 \item[-] If $y_0<y_1<y<y'<y_2$, then it is easy to prove that $y'_\varphi,y_0,y_1$ and $y_2-k$  satisfy \textbf{G1-G6} in $\cG'$.
 
 \item[-] If $y_0<y_1<y_2<y<y'$, then it is easy to observe that $y'_\varphi,y_0,y_1$ and $y_2$  satisfy \textbf{G1-G6} in $\cG'$.
\end{compactitem}

\noindent Hence, in all possible cases $\cG'$ is a compass generator for $\varphi$.
\end{proof}

\begin{theorem}\label{th:contraction-infinite}
An $\ABBL$-formula $\varphi$ is satisfied by some infinite interval structure iff it is 
featured by some compass generator of length $N\le (2|\varphi|+1)^{2^{8\len{\varphi}}}\cdot 2^{16\len{\varphi}^2 + 8\len{\varphi}}$ (i.e., double 
exponential in $\len{\varphi}$). 
\end{theorem}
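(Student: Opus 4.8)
The plan is to transfer the finite-case argument of Theorem~\ref{th:contraction_finite} to compass generators, using global compatibility (Definition~\ref{def:global-compatible-rows}) in place of plain compatibility. The right-to-left direction is immediate: a compass generator for $\varphi$ of any length yields, by Theorem~\ref{thm:generateinfinite}, an infinite fulfilling compass structure featuring $\varphi$, hence an infinite model of $\varphi$. For the left-to-right direction I would take, again by Theorem~\ref{thm:generateinfinite}, an arbitrary compass generator $\cG$ for $\varphi$ and repeatedly apply Lemma~\ref{lemma:contraction-infinite}: as long as $\cG$ contains two globally compatible rows $y<y'$, contracting them produces a strictly shorter compass generator for $\varphi$. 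Since the size is a nonnegative integer that strictly decreases at each step, the process halts at a \emph{saturated} generator, one with no two globally compatible rows, and it only remains to bound its length.

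To bound the length I would assign to each row $y$ a finite \emph{signature} and show, as in the finite case, that two rows with equal signatures are globally compatible; a saturated generator can therefore contain no repeated signature. The signature consists of (i) the characteristic function $c_y\colon\cA_\varphi\to\{0,\dots,2\len{\varphi}\}$ capping at $2\len{\varphi}$ the number of occurrences of each atom on row $y$, which fixes $\shading_\cG(y)$ and supplies the multiplicities needed to build the injective map of condition~5, giving $(2\len{\varphi}+1)^{2^{8\len{\varphi}}}$ possibilities; (ii) the border atom $\cL(y-1,y)$, giving at most $2^{8\len{\varphi}}$ possibilities; and (iii) the atoms $\cL(x,y)$ occurring at the at most $2\len{\varphi}$ fixed columns $x\in\pi_{y_1}(\pastwit(y_1))$, which is precisely the information that lets the map of condition~5 fix every past witness (the requirement $w(x)=x$, for which one needs $\cL(x,y)=\cL(x,y')$), giving at most $\bigl(2^{8\len{\varphi}}\bigr)^{2\len{\varphi}}=2^{16\len{\varphi}^2}$ possibilities. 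Multiplying the three ranges yields $N\le(2\len{\varphi}+1)^{2^{8\len{\varphi}}}\cdot 2^{16\len{\varphi}^2+8\len{\varphi}}$. The landmark rows $y_\varphi,y_0,y_1,y_2$ and the witness rows that conditions~2--4 of Definition~\ref{def:global-compatible-rows} forbid from lying between contracted rows are controlled by restricting contractions to the regions delimited by $y_0,y_1,y_2$ (exactly the case split of Lemma~\ref{lemma:contraction-infinite}); these contribute only low-order factors absorbed by the slack in the stated bound.

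The step I expect to be the main obstacle is proving that equality of signatures really forces \emph{global} compatibility, not merely compatibility. Components (i) and (ii) reproduce the finite-case implication and yield condition~1 together with the $\witset(y')$ part of condition~5. What is genuinely new is discharging the remainder of condition~5: a single injective map $w$ must simultaneously route the witnesses of $\witset(y')$, fix every past witness of $\pastwit(y_1)$, and accommodate the future witnesses of $\futwit(y_2)$, while the characteristic function only guarantees matching atom multiplicities up to the cap $2\len{\varphi}$. I would secure the positional constraint on $\pastwit(y_1)$ via component~(iii) together with condition~\textbf{G4} (which pins the past-witness columns at or above $y_0$, so that the contraction $f$ of Lemma~\ref{lemma:contraction-infinite} preserves them), and discharge conditions~2--4 region by region following the four cases of Lemma~\ref{lemma:contraction-infinite}, checking in each that landmarks, past witnesses, and future witnesses all survive the contraction. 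Once this verification is complete the bound holds over $\bbZ$, and the unilaterally unbounded cases $\bbN$ and $\bbZ^-$ follow by the same argument applied to the correspondingly adapted compass generators.
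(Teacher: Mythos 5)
Your proposal is correct and follows essentially the same route as the paper: both directions go through Theorem~\ref{thm:generateinfinite}, the length bound is obtained by contracting globally compatible rows via Lemma~\ref{lemma:contraction-infinite} until saturation, and your three-component signature (capped characteristic function, border atom $\cL(y-1,y)$, and the tuple of atoms at the past-witness columns) is exactly the paper's triple $(c_y,\ \cL(y-1,y),\ W_y)$, yielding the same count $(2\len{\varphi}+1)^{2^{8\len{\varphi}}}\cdot 2^{16\len{\varphi}^2+8\len{\varphi}}$. Your added discussion of why equal signatures force \emph{global} compatibility (condition~5 and the region-by-region handling of the landmark rows) fills in details the paper leaves as ``easy to prove,'' but does not change the argument.
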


\begin{proof}

Suppose that $\varphi$ is satisfied by a infinite interval structure $\cS$. 
By Theorem \ref{thm:generateinfinite}, there is a compass generator $\cG$ that features $\varphi$.
By Lemma~\ref{lemma:contraction-infinite}, we can assume without loss of generality that all rows of $\cG$ are pairwise global-incompatible. Let $c_y$ the characteristic function defined in the proof of Theorem~\ref{th:contraction_finite}. Now, let $x_1 < \ldots < x_k$ be the ordered sequence of the points in $\pastwit(y_1)$. We associate to every row $y$ a finite word $W_y$ of length $|W_y|\leq k \leq 2 \cdot |\varphi|$
on the alphabet $\cA_{\varphi}$ ($|\cA_{\varphi}|= 2^{8\len{\varphi}}$) such that for every $x_i \in \pastwit(y_1)$, $W(i)=\cL(x_i,y)$. It is easy to prove that two rows $y<y'$ in $\bbO$ with $c_y(F)=c_{y'}$, $W_{y}=W_{y'}$ and such that $\cL(y'-1,y') = \cL(y-1,y)$ are global-compatible.  

Since the number of possible characteristic functions is bounded by $(2|\varphi|+1)^{2^{8\len{\varphi}}}$, and the number of possible words is bounded by  $(2^{8\len{\varphi}})^{2 \cdot\len{\varphi}} = 2^{16\len{\varphi}^2}$,
$\cG$ cannot have more than $(2|\varphi|+1)^{2^{8\len{\varphi}}}\cdot 2^{16\len{\varphi}^2 + 8\len{\varphi}}$ rows, and thus $N$ is at most doubly exponential in $\len{\varphi}$.
\end{proof}

\section{Complexity bounds to the satisfiability problem for \tABBL}\label{sec:completeness}
	
In this section, we discuss the complexity of the satisfiability problem for $\ABBL$ interpreted
over strongly discrete interval temporal structures. An EXPSPACE lower bound on the complexity follows from the reduction of the \emph{exponential-corridor tiling problem} (which is known to be EXPSPACE-complete~\cite{convenience_of_tilings}) to the satisfiability problem for the fragment $\ABB$ given in~\cite{abb_report}.


To give an upper bound to the complexity we claim that the existence of a compass structure (or compass generator)
$\cG$ that features a given formula $\varphi$ can be decided by verifying suitable
local (and stronger) consistency conditions over all pairs of contiguous rows, in a way similar to the EXPSPACE algorithm given in~\cite{abb_report} for \tABB. In this way, to check those local conditions it is sufficient to store only
\begin{inparaenum}[\it (i)]
	\item a counter $y$ with the number of the current row,
	\item two guessed shadings $S$ and $S'$ associated with the rows $y$ and $y+1$, and
	\item the characteristic functions of the shadings of $y$ and $y+1$.
\end{inparaenum}
Since all this information needs only an exponential amount of space, the complexity of the satisfiability problem for $\ABBL$ is in EXPSPACE.
The procedure for the infinite case is depicted in Figure \ref{fig:code}. 
For the sake of brevity, given a shading $S$ we denote with $F_S^\pi$ the unique element of $S$ such that $\req_{B}(F_S^\pi)=\emptyset$. Note that for every row $y$ with shading $S$, the type of the unit interval $[y-1,y]$ is exactly $F_S^\pi$, while the type $F$ of all other intervals in the row must contain the formula $\DB\top$, and thus it cannot be the case that $\req_{B}(F)=\emptyset$. Given a function $c_S: S \rightarrow \{0,...,8|\varphi| + 14\}$ such that $c_S(F_S^\pi)\leq 1$, we denote with $\bar{S}$ ({\em extended shading}) the pair $\langle S, c_S \rangle$; thus, in the code we use $S$ to denote a shading, and $\bar{S}$ to denote an extended-shading. Moreover we have to introduce the following stronger version of the relation $\dep{B}$:
$$
\begin{array}{rcl}
  F \localdep{B} G
  &\;\quad\text{iff}\quad\quad&
  \begin{cases}
    \req_B(F)                                              \;=\;          \obs(G) \,\cup\, \req_B(G)            \s1 \\
    \req_{\bar{B}}(G)                                      \;=\;          \obs(F) \,\cup\, \req_{\bar{B}}(F)
    \s1 \\
    \req_{\bar{L}}(F) = \req_{\bar{L}}(G).
  \end{cases}
\end{array}
$$

Finally, given two extended shadings $\bar{S}=\langle S, c_S \rangle$ and $\bar{S'}=\langle S', c_{S'} \rangle$,
we say that $\bar{S'}$  is a \emph{successor} of $\bar{S}$, and we write $\bar{S}\localdep{}\bar{S'}$, if the following conditions hold:
\begin{itemize}
\item for every   $F\in S'$ with $\req_B(F)\neq \emptyset$ there exists $G \in S$ with $F\localdep{B}G$;
\item there exists a set $R \subseteq S' \times S \times \{1,...,8|\varphi| + 14\}$ such that for every $(F,G,n)\in R$, $F\localdep{B}G$, for every $F\in S'$ we have $\sum\limits_{(F,G,n)\in R} n=c_{S'}(F)$, and for every
$G\in S$ we have $\sum\limits_{(F,G,n)\in R} n=c_{S}(G)$.
\end{itemize}
The second condition ensures that all the witnesses of the lower shading $S$ are correctly transferred in the upper shading $S'$ according to the functions $c_{S}$ and $c_{S'}$. It is easy to see that, given two rows $y$ and $y+1$ with shadings $S$ and $S'$, the two extended shadings $\bar{S} = \langle S, c_y \rangle$ and $\bar{S'} = \langle S', c_{y+1} \rangle$, (where $c_y$ and $c_{y+1}$ are the characteristic functions of $y$ and $y+1$, respectively) are such that $\bar{S}\localdep{}\bar{S'}$.

 The main procedure basically guesses two extended shadings $\bar{S}_{past}$ and $\bar{S}_{future}$ which represent
 the rows $y_0$ and $y_2$ of a compass generator, and then it checks whether a compass generator featuring them exists.  The procedure $checkPast$ ensures that we can construct  the portion of the compass structure between $y_0$ and $y_1$ (see Figure \ref{fig:compassgenerator}). The procedure starts from $y_0$ and construct this portion incrementally row by row until it reaches row $y_1$. The procedure exits successfully when it reaches, without exceeding the given number of steps, a row labelled with the extended shading $\bar{S}_{past}$ and such that all formulas $\psi \in \req_{\bar{L}}(F^\pi_{S_{past}})$ are ''witnessed'' by points with the first coordinate greater than the starting row (i.e.,  points belonging to the red triangle in Figure \ref{fig:compassgenerator}) to guarantee that there exists a past witness set for $y_1$ that respects condition \textbf{G3} of Definition~\ref{def:compassgenerator}. This condition is verified by means of the set $S_{lower}$ which keeps track of such points. The procedure $checkFinite$ simply checks if the extended shading $\bar{S}_{future}$ is ''reachable'' from the extended shading $\bar{S}_{past}$, and thus it represents the construction of the finite part of a compass generator (the portion between $y_1$ and $y_2$ in Figure \ref{fig:compassgenerator}). Finally the the procedure $checkFuture$ ensures that we can construct  the portion between $y_2$ and $N-1$ of a compass generator.
 This last procedure is similar to the procedure $checkPast$, and it checks whether there exists a portion of a compass structure where both the lowest and the biggest rows are labelled with $\bar{S}_{future}$. To guarantee that a future witness set for $y_2$ exists (condition \textbf{G6} of Definition~\ref{def:compassgenerator}), we require that for every $F\in S_{future}$ and for every $\psi \in \req_{\bar{B}}(F)$, it is the case that $\psi$ is fulfilled by some successor of $S_{future}$. This condition is ensured by means of the set $REQ_F$, which keeps track of the formulas in $\req_{\bar{B}}(F)$ that still need to be satisfied. It is worth to notice that all the counters, the extended shadings, and the shadings using in these procedures can be represented using exponential space with respect to the length of the input formula.
%
%
%
Summing up, we obtain the following tight complexity result.

\begin{theorem}\label{th:complexity}
The satisfiability problem for $\ABBL$ interpreted over strongly complete linear orders is EXPSPACE-complete.
\end{theorem}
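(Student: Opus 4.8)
The plan is to prove the two matching bounds separately. The \textbf{lower bound} is immediate from the material already at hand: $\ABB$ is the syntactic fragment of $\ABBL$ obtained by discarding the operator $\DLbar$, and since the semantics of the remaining operators $\DA$, $\DB$, $\DBbar$ is unchanged, a $\DLbar$-free formula is satisfiable in $\ABBL$ over a given strongly discrete order exactly when it is satisfiable in $\ABB$ over that same order. Hence the cited reduction from the (EXPSPACE-complete) exponential-corridor tiling problem to $\ABB$-satisfiability transfers verbatim, and $\ABBL$-satisfiability is EXPSPACE-hard.

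For the \textbf{upper bound} the key observation is that the doubly-exponential small-model bounds of Theorems~\ref{th:contraction_finite} and~\ref{th:contraction-infinite} only yield a 2NEXPTIME procedure if one guesses an entire model; instead I would decide the existence of a compass structure (finite case) or of a compass generator (infinite case) by a nondeterministic \emph{row-by-row} search that never stores more than a bounded number of rows. Concretely, the algorithm runs the procedures sketched before the theorem (Figure~\ref{fig:code}): it guesses the extended shadings $\bar{S}_{past}=\langle S_{past},c_{S_{past}}\rangle$ and $\bar{S}_{future}=\langle S_{future},c_{S_{future}}\rangle$ that occupy the distinguished rows $y_0$ and $y_2$ of Definition~\ref{def:compassgenerator}, and then uses \emph{checkPast}, \emph{checkFinite} and \emph{checkFuture} to reconstruct, one contiguous pair of rows at a time, the three portions of the generator delimited by $y_0<y_1<y_2$. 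At every step it retains only a row counter $y$, the extended shadings $\bar{S}$ and $\bar{S'}$ of the current and next rows, and the bounded auxiliary sets ($S_{lower}$ and the $REQ_F$) recording which $\bar{L}$-requests of the border interval and which $\bar{B}$-requests of the future witnesses still await fulfilment. Each upward step must respect the local successor relation $\bar{S}\localdep{}\bar{S'}$, which, through the strengthened dependency $\localdep{B}$, forces every $B$-request to be discharged immediately and transfers the witness multiset encoded by $c_S$ correctly into $c_{S'}$.

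The heart of the argument is to show that this purely local check on pairs of adjacent rows is both sound and complete for the existence of a generator. \emph{Completeness} is the easy direction: by Theorems~\ref{th:contraction_finite} and~\ref{th:contraction-infinite} a satisfiable $\varphi$ admits a compass structure, resp.\ generator, of at most doubly-exponential length, and reading its consecutive rows as extended shadings exhibits an accepting run of the procedures. \emph{Soundness} is where the real work lies: I must verify that any run accepted by the three procedures can be assembled into a genuine partially fulfilling structure meeting conditions \textbf{G1}--\textbf{G6}, so that Theorem~\ref{thm:generateinfinite} converts it into an infinite fulfilling model. This requires checking that $\localdep{}$ captures all the consistency and fulfilment obligations expressed globally in Definition~\ref{def:global-compatible-rows} --- in particular that $\localdep{B}$ really forces $B$-fulfilment within a single row-step, that the counter arithmetic in the successor relation faithfully realises the injective witness mapping $w$, and that the deferred $A$-, $\bar{B}$- and $\bar{L}$-requests are all redeemed at the appropriate borders via the bookkeeping sets. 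I expect this reconstruction of a two-dimensional consistent-and-fulfilling labelling from a one-dimensional stream of shadings to be the main obstacle.

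It remains to account for the space. A counter bounded by the doubly-exponential length of Theorem~\ref{th:contraction-infinite} needs only singly-exponentially many bits; each extended shading is a subset of the at most $2^{8\len{\varphi}}$ atoms decorated with counters bounded by $8\len{\varphi}+14$, hence of exponential size; and the auxiliary sets are no larger. So every configuration fits in $2^{\upperbound{\len{\varphi}}}$ space and the nondeterministic procedure runs in exponential space. Since nondeterministic exponential space coincides with deterministic exponential space by Savitch's theorem, $\ABBL$-satisfiability lies in EXPSPACE, and together with the lower bound this gives EXPSPACE-completeness.
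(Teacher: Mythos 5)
Your proposal follows essentially the same route as the paper: the lower bound is inherited from the exponential-corridor tiling reduction to the $\ABB$ fragment, and the upper bound is obtained by the nondeterministic row-by-row procedure of Figure~\ref{fig:code} over extended shadings linked by the successor relation $\localdep{}$, with the same exponential-space accounting (counter, two extended shadings, bookkeeping sets) capped by the doubly-exponential bounds of Theorems~\ref{th:contraction_finite} and~\ref{th:contraction-infinite}. The soundness of the purely local check, which you rightly flag as the main remaining obstacle, is also left at the level of a claim in the paper itself, so your treatment matches the paper's in both structure and level of detail.
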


\begin{figure}[t!]
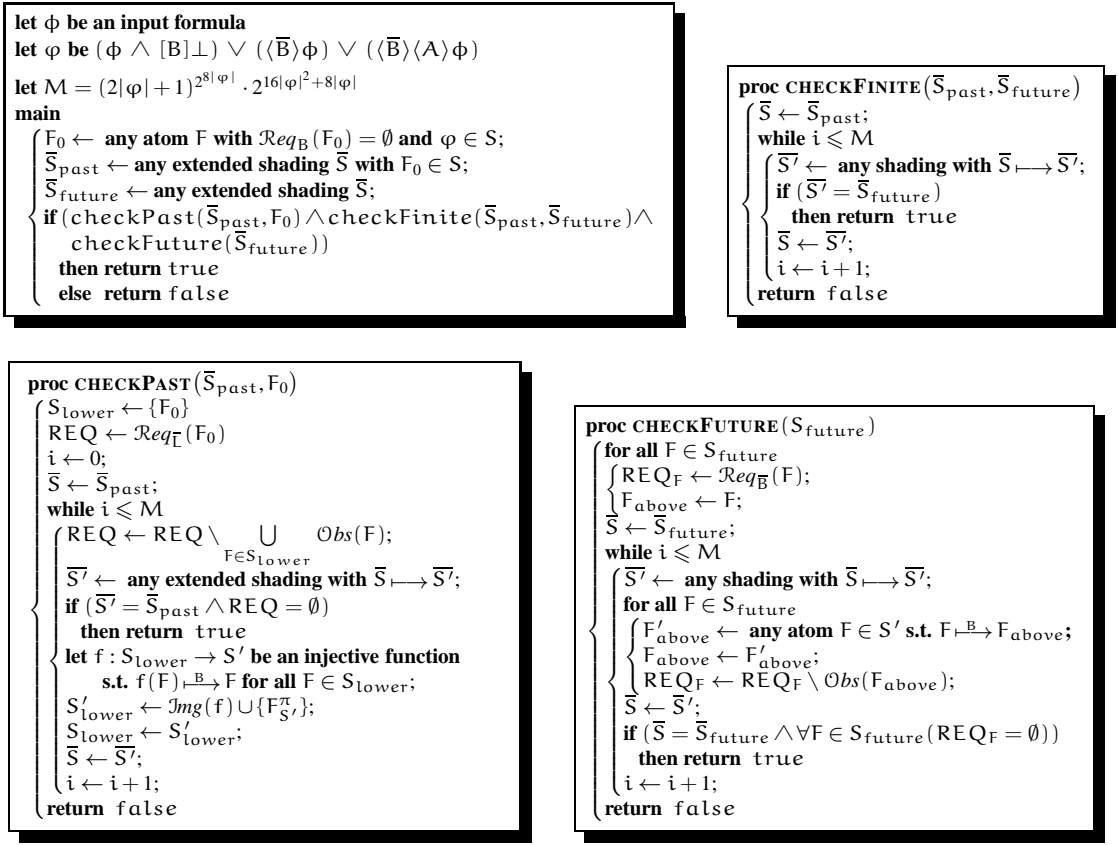

\scriptsize
\centering
\begin{code}{87mm}

 \LET \phi \text{ be an input formula} \\
 \LET \varphi \text{ be } (\phi\et\BB\bot)\vel(\DBbar\phi)\vel(\DBbar\DA\phi)   \s1 \\
 \LET M =(2|\varphi|+1)^{2^{8\len{\varphi}}}\cdot 2^{16\len{\varphi}^2 + 8\len{\varphi}}\\

\MAIN
\BEGIN
F_0 \gets \text{ any atom $F$ with $\req_{B}(F_0)=\emptyset$ and $\varphi \in S$};\\
\bar{S}_{past} \gets \text{any extended shading $\bar{S}$ with $F_0 \in S$};\\
\bar{S}_{future} \gets \text{any extended shading $\bar{S}$};\\
\IF\!(checkPast(\bar{S}_{past}, F_0)\wedge checkFinite(\bar{S}_{past},\bar{S}_{future})\wedge \\  \ \ \ \ \ checkFuture(\bar{S}_{future}) )
\THEN \RETURN\  true \ELSE \RETURN\  false
\END
\ENDMAIN
\end{code}
\quad
\begin{code}{48mm}

\FUNCTION{checkFinite}{\bar{S}_{past},\bar{S}_{future}}
\BEGIN
\bar{S}\gets \bar{S}_{past};\\
\WHILE i\leq M\\
\BEGIN
\bar{S'}\gets \text{ any shading with $\bar{S}\localdep{} \bar{S'}$};\\
\IF (\bar{S'}=\bar{S}_{future})
\THEN \RETURN \ \ true\\
\bar{S} \gets \bar{S'};\\
i\gets i+1;\\
\END\\
\RETURN \ \ {false}
\END

\ENDFUNCTION
\end{code}

\bigskip

\begin{code}{67mm}
\FUNCTION{checkPast}{\bar{S}_{past}, F_0}
\BEGIN
S_{lower} \gets \{F_0\} \\
REQ \gets \req_{\bar{L}}(F_0)\\
i\gets 0;\\
\bar{S}\gets \bar{S}_{past};\\
\WHILE i\leq M\\
\BEGIN
REQ \gets REQ \setminus \bigcup\limits_{F\in S_{lower}} \obs(F);\\
\bar{S'}\gets \text{ any extended shading with $\bar{S}\localdep{} \bar{S'}$};\\
\IF (\bar{S'}=\bar{S}_{past}\wedge REQ =\emptyset)
\THEN \RETURN \ \ true\\
\LET f:S_{lower} \rightarrow S' \text{ be an injective function}\\ \text{\ \ \ \ \ \ \ \medskip s.t. $f(F)\localdep{B}F$ for all $F\in S_{lower}$};\\
S'_{lower} \gets \img(f) \cup\{F^\pi_{S'}\};\\
S_{lower} \gets S'_{lower};\\
\bar{S} \gets \bar{S'};\\
i\gets i+1;\\
\END\\
\RETURN \ \ {false}
\END

\ENDFUNCTION
\end{code}
\quad
\begin{code}{67mm}
\FUNCTION{checkFuture}{S_{future}}
\BEGIN
\FORALL F\in S_{future}\\
\BEGIN
 REQ_F \gets \req_{\bar{B}}(F) ;\\
F_{above} \gets F  ;\\
\END\\
\bar{S}\gets \bar{S}_{future};\\
\WHILE i\leq M\\
\BEGIN
\bar{S'}\gets \text{ any shading with $\bar{S}\localdep{} \bar{S'}$};\\
\FORALL F \in S_{future}\\
\BEGIN
F'_{above}\gets \text{ any atom $F\in S'$ s.t. $F\localdep{B} F_{above}$; }\\
F_{above} \gets F'_{above};\\
REQ_F\gets REQ_F \setminus \obs(F_{above});\\
\END\\
\bar{S}\gets \bar{S}';\\
\IF (\bar{S}=\bar{S}_{future}\wedge \forall F\in S_{future}  (REQ_F=\emptyset))
\THEN \RETURN \ \ true\\
i\gets i+1;\\
\END\\
\RETURN \ \ {false}
\END

\ENDFUNCTION
\end{code}
\label{fig:code}
\caption{the procedure for checking the satisfiability of $\phi$ over the integers.}
\end{figure}

\section{Conclusions}\label{sec:conclusions}
	
We considered an interval temporal logic (\tABBL) with four modalities, corresponding, respectively, to Allen's interval relations \emph{after}, \emph{begins}, \emph{begun-by}, and \emph{before}, and interpreted in the class of all strongly discrete linearly ordered sets, which includes, among others, all frames built over $\bbN$, $\bbZ$, and finite orders. We showed that this logic is decidable in EXPSPACE, and complete for this class. The importance of this result relies on the fact that, for the considered interpretations, this logic is maximal with respect to decidability. Moreover, these results represent a non-trivial contribution towards the complete classification of all fragments of Halpern and Shoham's modal logic of intervals. We plan to complete the study of this particular language when it is interpreted over other classes of orders, such as the class of all dense linearly ordered sets, or the class of all linear orders, and to refine these results to include point-intervals, too.

\bibliographystyle{eptcs} 
\bibliography{biblio}


\end{document}